\numberwithin{equation}{section}
\newtheorem{theorem}{Theorem}[section]
\newtheorem{lemma}[theorem]{Lemma}
\newtheorem{proposition}[theorem]{Proposition}
\newtheorem{remark}[theorem]{Remark}
\newtheorem{definition}[theorem]{Definition}
\theoremstyle{definition}
\renewcommand{\tilde}{\widetilde}          
\DeclareMathSymbol{\leqslant}{\mathalpha}{AMSa}{"36} 
\DeclareMathSymbol{\geqslant}{\mathalpha}{AMSa}{"3E} 
\renewcommand{\leq}{\;\leqslant\;}                   
\renewcommand{\geq}{\;\geqslant\;}                   
\newcommand{\C}{\mathbb{C}}
\newcommand{\R}{\mathbb{R}}
\newcommand{\Z}{\mathbb{Z}}
\newcommand{\N}{\mathbb{N}}
\newcommand{\E}{\mathbb{E}}
\newcommand{\D}{\mathbb{D}}
\def\eps{\varepsilon}
\newcommand{\bz}{\textbf{z}}
\newcommand{\bx}{\textbf{x}}
\newcommand{\by}{\textbf{y}}
\newcommand{\cor}{\langle \prod_{i=1}^N V_{\alpha_i}(z_i) \rangle}
\newcommand{\core}{\langle \prod_{i=1}^N V_{\alpha_i,\eps}(z_i) \rangle_\eps}
\newcommand{\ball}[2]{\mathbf{1}_{B^{}_{#2}}}
\newcommand{\ballc}[2]{\mathbf{1}_{(B^{}_{#2})^c}}
\newcommand{\ann}[2]{\mathbf{1}_{A^{}_{#2}}}
\newcommand{\annc}[2]{\mathbf{1}_{(A^{}_{#2})^c}}
\newcommand{\ba}[2]{\mathbf{1}_{A^{}_{#2} \cap B^{}_{#2}}}
\newcommand{\bac}[2]{\mathbf{1}_{A^{}_{#2} \cap (B^{}_{#2})^c}}
\newcommand{\sF}{\mathscr{F}}
\newcommand{\opn}{\operatorname}
\let\ij\ijj
\newcommand{\ij}{{}}
\newcommand{\kj}{{a(j)}}
\newcommand{\tkj}{{b(j)}}
\newcommand{\dx}[1]{\prod_{j=1}^{#1} d^2 x_j}
\newcommand{\dxj}[1]{\prod_{j \in {#1}} d^2 x_\kj d^2 x_\tkj}
\newcommand{\rs}{{\widehat \C}}
\newcommand{\sob}{{H^1(\rs ,g)}}
\newcommand{\sobo}{{H^1_0(\rs ,g)}}
\newcommand{\dsob}{{H^{-1}(\rs ,g)}}
\newcommand{\dsobo}{{H^{-1}_{0}(\rs ,g)}}
\newcommand{\ssob}{{H^s(\rs ,g)}}
\newcommand{\ssobd}{{H^{-s}(\rs ,g)}}
\newcommand{\ssobo}{{H^s_0(\rs ,g)}}
\newcommand{\ssobod}{{H^{-s}_0(\rs , g)}}
\newcommand{\vol}[1]{{\opn{vol}_g(d^2 {#1})}}
\title{Smoothness of correlation functions in Liouville Conformal Field Theory}
\author{Joona Oikarinen}
\address{University of Helsinki, Department of Mathematics and Statistics, P.O. Box 68, FI-00014 University of Helsinki, Finland}
\email{joona.oikarinen@helsinki.fi}
\urladdr{}
\date{\today}
\begin{document}

\maketitle

\begin{abstract}
We prove smoothness of the correlation functions in probabilistic Liouville Conformal Field Theory. Our result is a step towards proving that the correlation functions satisfy the higher Ward identities and the higher BPZ equations, predicted by the Conformal Bootstrap approach to Conformal Field Theory.
\end{abstract}

\footnotesize

\normalsize

\tableofcontents

\section{Introduction and main result}
The classical Liouville Field Theory on the Riemann sphere $\rs = \C \cup \{\infty\}$ is a two-dimensional scalar field theory described by the \emph{Liouville Action functional}
\begin{align}\label{intro_action}
S_L(X,g) &= \frac{1}{\pi} \int_\C ( |\partial_z X(z)|^2 + \tfrac{Q}{4}  R_g(z) X(z) g(z) + \pi \mu e^{\gamma X(z)} g(z)) \, d^2z\,.
\end{align}
Here $g(z)|dz|^2$ is some fixed diagonal background metric on the sphere, \\$\partial_z = \frac{1}{2}(\partial_x - i \partial_y)$, $\partial_{\bar z} = \frac{1}{2}(\partial_x + i \partial_y) $ for $z=x+iy$, $R_g$ is the scalar curvature given by $-4 g^{-1} \partial_z \partial_{\bar z} \ln g$ and $\gamma \in (0,2)$, $\mu \in (0,\infty)$ are parameters. In the classical theory one sets $Q_c = \frac{2}{\gamma}$, but we will work with the quantized theory where a renormalization leads to $Q = \frac{2}{\gamma} + \frac{\gamma}{2}$. The two dimensional Lebesgue measure is denoted by $d^2z$. For an account of the classical theory see for example \cite{TZ,TT}.

Quantisation of Liouville theory then amounts to defining the measure $e^{-S_L(X,g)} D X$ on a space of generalized functions from $\rs$ to $\R$ (a negative order Sobolev space) so that the observables $F$ of the random field $X$ are given by the path integral
\begin{align}\label{intro_path}
\langle F \rangle_g &:= \frac{1}{Z} \int F(X) e^{-S_L(X,g)} \, D X\,,
\end{align}
where $Z$ is a normalization constant and $DX$ denotes a formal infinite dimensional Lebesgue measure on the chosen space of generalized functions. The resulting theory exhibits conformal symmetry and is called the Liouville Conformal Field Theory (LCFT). The motivation for studying this theory comes from the hope to understand conformal metrics of the form $e^{\gamma X(z)} g(z) |dz|^2$ on $\rs$, where $X$ is the random field with law (\ref{intro_path}).

The rigorous definition of the path integral (\ref{intro_path}) was given by David-Kupianen-Rhodes-Vargas in \cite{DKRV} by using Gaussian Multiplicative Chaos (GMC) methods. For us the relevant observables of the field will be the correlations of $V_\alpha(z) := e^{\alpha \phi(z)}$, where $\phi(z) = X(z) + \frac{Q}{2} \ln g(z)$. Thus we consider
\begin{align}\label{intro_correlations}
\cor_g  = Z^{-1} \int \prod_{i=1}^N e^{\alpha_i \phi(z_i)} e^{S_L(X,g)} \, D X\,,
\end{align}
where $\alpha_i$ are real numbers (the Liouville momenta) with certain restrictions and $z_i \in \rs$ (the insertions). These are called vertex operators in the physics literature, and they are relevant for understanding the conformal metrics $e^{\gamma X(z)} g(z) |dz|^2$. The correlations (\ref{intro_correlations}) with real weights $\alpha_i$ are relevant for many conjectures related to scaling limits of random planar maps coupled to certain statistical physics models \cite{DKRV, Kup, rv_lectures}. From the CFT point of view one would also want to define these for complex $\alpha_i$. Results in this direction can be found in \cite{dozz} (Theorem 4.1) and \cite{Hua} (Theorem 1.1). In the CFT language the fields $e^{\alpha \phi(z)}$ are supposed to be the primary fields of the LCFT when $\alpha$ belongs to the spectrum, which is supposed to be $Q + i \R$. Many other quantities of the theory are then supposed to be expressible in terms of the correlation functions of the primary fields.

In the physics literature the quantum Liouville theory was first considered in the context of String Theory by Polyakov \cite{Pol} as a building block for Liouville Quantum Gravity. Physics reviews of Liouville theory can be found in \cite{Nak,revisited, Rib}. 

In addition to the path integral formulation, Liouville theory has also been studied (in the physics literature) by using the Conformal Bootstrap method, developed by Belavin-Polyakov-Zamolodchikov in \cite{BPZ}. One goal of the recent mathematical study of LCFT is to unify the path integral and the Conformal Bootstrap approaches, since the equivalence of these two formulations has been controversial even for physicists. Rigorous results in this direction can be found in \cite{ward, dozz, dozz2}. For references on the Bootstrap see e.g. \cite{Rib, revisited}. One mathematical consistency check of the equivalence of the Path Integral and the Conformal Bootstrap was done in \cite{Bav}, where it was shown that the one-point function of LCFT on the torus agrees with the predictions of the Conformal Bootstrap in the large moduli limit. In \cite{BW} the authors derived fusion estimates for the four point function of LCFT on the Riemann sphere and showed that they agree with the predictions of the Conformal Bootstrap.

In this article we will establish the smoothness of the correlation functions (\ref{intro_correlations}) with respect to the insertions $(z_i)_{i=1}^N$, which is required for rigorously proving the CFT structure of Liouville theory, predicted by the Conformal Bootstrap approach (see Section \ref{perspectives}). We now state the exact form of our theorem.

\subsection{Main result}

\begin{theorem}\label{theorem}
Assume that the tuple $(\alpha_i)_{i=1}^N \in \R^N$ satisfies the Seiberg bounds
\begin{align*}
\sum_{i=1}^N \alpha_i > 2Q \,, \quad \alpha_i < Q \quad  \forall i\,,
\end{align*}
and that $g$ is any diagonal Riemannian metric $g(z)|dz|^2$. Then the functions
\begin{align*}
(z_1,\hdots,z_N) \mapsto \cor_g
\end{align*}
are $C^\infty$ on $ U_N := \{ (z_1,\hdots,z_N) \in \C^N : z_i \neq z_j, \forall \, i \neq j\}$.
\end{theorem}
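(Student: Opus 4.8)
The plan is to start from the David--Kupiainen--Rhodes--Vargas probabilistic representation of the correlation function. After integrating out the constant (zero) mode of the Gaussian field, the correlation takes the form
\[
\cor_g = C(\bz)\,\mu^{-s}\,\Gamma(s)\,\E\big[ Y(\bz)^{-s}\big],\qquad s=\tfrac1\gamma\Big(\textstyle\sum_{i=1}^N\alpha_i-2Q\Big)>0,
\]
where $C(\bz)$ collects the Koba--Nielsen factor $\prod_{i<j}|z_i-z_j|^{-\alpha_i\alpha_j}$ together with smooth curvature and metric terms, and
\[
Y(\bz)=\int_{\rs} F(\bz,x)\,M_\gamma(d^2x),\qquad F(\bz,x)=\prod_{i=1}^N|x-z_i|^{-\gamma\alpha_i}\,e^{\gamma\sum_i\alpha_i W(x,z_i)},
\]
with $M_\gamma$ the Gaussian multiplicative chaos of the underlying field (which does \emph{not} depend on $\bz$) and $W$ the smooth part of the Green function. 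Since $C(\bz)$ is manifestly $C^\infty$ on $U_N$, the whole problem reduces to proving that $\bz\mapsto\E[Y(\bz)^{-s}]$ is $C^\infty$ on $U_N$. The Seiberg bounds enter exactly here: $\alpha_i<Q$ means $\gamma\alpha_i<\gamma Q=2+\tfrac{\gamma^2}{2}$, so the singularity $|x-z_i|^{-\gamma\alpha_i}$ is subcritical for the chaos and $Y(\bz)\in(0,\infty)$ almost surely, while $\sum_i\alpha_i>2Q$ makes $s>0$ so that we are taking a \emph{negative} moment, for which all moments $\E[Y^{-q}]$ are finite and locally bounded in $\bz$.

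To prove smoothness I would work with the $\eps$-regularized field and the associated mass $Y_\eps(\bz)=\int_{\rs}F(\bz,x)\,M_{\gamma,\eps}(d^2x)$, for which $\bz\mapsto\E[Y_\eps(\bz)^{-s}]$ is smooth because $M_{\gamma,\eps}$ has an a.s.\ continuous density and $Y_\eps>0$. Since $\E[Y_\eps^{-s}]\to\E[Y^{-s}]$ as $\eps\to0$ (the convergence underlying the DKRV construction), it suffices to show that for every multi-index $\beta$ in the variables $(z_k,\bar z_k)_{k=1}^N$ the derivatives $\partial^\beta_\bz\E[Y_\eps(\bz)^{-s}]$ converge \emph{locally uniformly} on $U_N$ as $\eps\to0$; uniform convergence of all derivatives then upgrades the limit to a $C^\infty$ function. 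Using $\partial_{z_k}F=\tfrac{\gamma\alpha_k}{2}\tfrac{F}{x-z_k}+(\text{smooth})\,F$ (and the conjugate identity for $\partial_{\bar z_k}$) together with the Fa\`a di Bruno formula, each $\partial^\beta_\bz\E[Y_\eps^{-s}]$ becomes a finite sum of terms
\[
\E\Big[\,Y_\eps^{-s-m}\prod_j D^{a_j,b_j}_{k_j}Y_\eps\Big],\qquad D^{a,b}_kY_\eps:=\int_{\rs}\frac{F(\bz,x)}{(x-z_k)^{a}(\bar x-\bar z_k)^{b}}\,M_{\gamma,\eps}(d^2x),
\]
so the task is to bound these uniformly in $\eps$ and locally uniformly in $\bz$, and to identify their limits.

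\textbf{The main obstacle} is that the descendant integrals $D^{a,b}_kY_\eps$ are not absolutely convergent in the limit: near $z_k$ their integrand behaves like $r^{-\gamma\alpha_k-a-b}e^{-i(a-b)\theta}$ in polar coordinates $x-z_k=re^{i\theta}$, and since $\alpha_k$ may be arbitrarily close to $Q$ the effective exponent $\gamma\alpha_k+a+b$ can exceed the critical value $2+\tfrac{\gamma^2}{2}$. Two distinct mechanisms must then be combined. First, when $a\neq b$ the oscillatory factor $e^{-i(a-b)\theta}$ produces angular cancellation against the rotationally structured chaos, which I would quantify through second-moment estimates on dyadic annuli $A_n=\{2^{-n-1}\le|x-z_k|\le2^{-n}\}$, yielding geometric decay of the annular contributions uniformly in $\eps$. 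Second, and more seriously, the coincident case $a=b$ (arising from mixed derivatives such as $\partial_{z_k}\partial_{\bar z_k}$) has no angular cancellation and is genuinely divergent; its control must come from the compensating negative power $Y_\eps^{-s-m}$, since a configuration making $D^{a,a}_kY_\eps$ large is one in which chaos mass concentrates near $z_k$, which simultaneously forces $Y_\eps$ to be large. I would make this precise via Girsanov (Cameron--Martin): each measure factor $M_{\gamma,\eps}(d^2x_j)$ in the product of descendants can be absorbed into the field, rewriting the expectation as a spatial integral
\[
\E\Big[Y_\eps^{-s-m}\prod_j D^{a_j,b_j}_{k_j}Y_\eps\Big]=\int\prod_j\frac{K(\bz,x_j)\,d^2x_j}{(x_j-z_{k_j})^{a_j}(\bar x_j-\bar z_{k_j})^{b_j}}\,\E\big[\tilde Y_\eps(\bz,\bx)^{-s-m}\big],
\]
with $K$ smooth and $\tilde Y_\eps$ the mass reweighted by $\prod_j e^{\gamma^2 W(x_j,\cdot)}$, i.e.\ carrying extra charges $\gamma$ at the points $x_j$.

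The crux is then a \emph{fusion (short-distance) estimate}: the negative moment $\bx\mapsto\E[\tilde Y_\eps(\bz,\bx)^{-s-m}]$ is a Liouville-type object in which an extra charge $\gamma$ sits at $x_j$ near the charge $\alpha_{k_j}$ at $z_{k_j}$, and as $x_j\to z_{k_j}$ the merged singularity of $\tilde Y_\eps$ forces this moment to vanish like a positive power of $|x_j-z_{k_j}|$ whenever the combined charge is supercritical, exactly the regime in which the kernel $(x_j-z_{k_j})^{-a_j}(\bar x_j-\bar z_{k_j})^{-b_j}$ fails to be integrable. Establishing these fusion bounds from multifractal GMC moment estimates, uniformly in $\eps$ and locally uniformly in $\bz$, is the step I expect to be hardest; granting them, Fubini and dominated convergence give the required locally uniform convergence of every $\partial^\beta_\bz\E[Y_\eps^{-s}]$, and combining this with the smoothness of the prefactor $C(\bz)$ shows that $\cor_g$ is $C^\infty$ on $U_N$.
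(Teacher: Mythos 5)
Your setup --- the DKRV/GMC representation, the $\eps$-regularization, Girsanov to trade each descendant factor for an extra $\gamma$-insertion, and the need for fusion estimates uniform in $\eps$ --- matches the paper's toolbox, and your first-order formula is precisely the paper's \eqref{derivative}. The gap lies in the mechanism you propose for the non-absolutely-convergent descendant integrals $D^{a,b}_k$, and it is structural rather than technical. First, the compensation from $Y_\eps^{-s-m}$ (your mechanism for $a=b$) is quantitatively too weak: after Girsanov, the gain from fusing the extra charge $\gamma$ at $x_j$ with $\alpha_{k_j}$ at $z_{k_j}$ is at most $\tfrac{1}{2}(\gamma+\alpha_{k_j}-Q)^2<\tfrac{\gamma^2}{2}<2$, while the integrability deficit $\gamma\alpha_{k_j}+a_j+b_j-2$ grows linearly in the order of differentiation; already at first order with $\alpha_k$ near $Q$ the deficit is $1+\gamma Q-\tfrac{\gamma^2}{2}-2=1$, so the fusion gain never closes it on its own. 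Second, the angular cancellation you invoke for $a\neq b$ cannot be quantified as you suggest: second moments of GMC integrals with these singular weights are infinite for $\gamma\geq\sqrt2$, and after Girsanov the function $\bx\mapsto\E[\tilde Y_\eps(\bz,\bx)^{-s-m}]$ against which $e^{-i(a-b)\theta}$ must cancel is not rotationally invariant about $z_{k_j}$; extracting the cancellation would require angular regularity of a correlation function in the position of a $\gamma$-insertion, which is a statement of exactly the type you are trying to prove, so the argument is circular as stated.

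The paper never faces $a+b\geq2$, because it never differentiates the kernel $\frac{1}{x-z_k}$. Instead, after each differentiation it applies an exact integration-by-parts (Stokes/residue) identity on a disc $B(z_i,r)$ --- equation \eqref{fundamental_identity} --- to re-express the single-pole, only conditionally convergent integral $\int_\C \frac{G(x;\bz)}{x-z_i}\,d^2x$ as a sum of \emph{absolutely} convergent integrals with separated kernels $F_j(x,y)=\mathbf 1_{B_j}(x)\mathbf 1_{B_j^c}(y)/(x-y)$ plus contour integrals over $\partial B_j$; the next derivative then acts on $G$ itself via Gaussian integration by parts, introducing a new integration variable and a new \emph{single} pole rather than raising the order of an existing one. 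The only fusion input then required is the $n$-pair estimate $G(\bx,\by;\bz)\leq C\prod_j|x_j-y_j|^{-2+\zeta}$ for pairs of $\gamma$-insertions straddling nested circles, which against $|x_j-y_j|^{-1}$ gives the integrable exponent $-3+\zeta$. To salvage your route you would need to build this exact integration-by-parts step (or an equivalent identity realizing the cancellation) into your Fa\`a di Bruno induction; the fusion and moment estimates alone will not close the argument.
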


The correlation functions were shown to be $C^2$ on this domain in \cite{ward} and our smoothness proof will be partially based on an iteration of their $C^1$-argument.

\begin{remark}
The generalization of Theorem \ref{theorem} for arbitrary Riemannian metrics is simple since an arbitrary Riemannian metric $g'$ can be written as $g' = \psi^* g$ where $\psi^*$ is a pullback of a diffeomorphism $\psi: \rs \to \rs$ and $g$ is a diagonal metric. The correlation functions are supposed to satisfy diffeomorphism covariance
\begin{align*}
\langle \prod_{i=1}^n V_{\alpha_i}(z_i) \rangle_{g'} &= \langle \prod_{i=1}^n V_{\alpha_i}(\psi(z_i)) \rangle_g\,,
\end{align*}
from which the generalization follows. Of course this requires defining the Liouville theory for arbitrary metrics. On surfaces with genus $2$ or higher this has already been done in \cite{GRV16}.
\end{remark}

\subsection{Perspectives}\label{perspectives}

The smoothness of the correlation functions is needed for the program of deriving the Conformal Bootstrap postulates from the path integral. The Conformal Bootstrap approach predicts that the correlation functions (\ref{intro_correlations}) appear in certain partial differential equations of arbitrarily high order. 

The first set of equations are the \emph{Conformal Ward identities}. These are supposed to emerge from a variation of the background metric $g$. More precisely, let $g = \sum_{i,j=1}^2 g_{ij} dx^i \otimes dx^j$ be a Riemannian metric and fix some nice functions $(f^{ij})_{i,j=1}^2$. We define smooth variations of this metric by $g^{ij}_\eps = g^{ij} + \eps f^{ij}$ where $g^{ij}$ are the components of the inverse matrix of $(g_{ij})_{i,j=1}^2$. Then we expect
\begin{align}\label{se_tensor}
\frac{d}{d\eps} \Big|_{\eps = 0} \cor_{g_\eps} &= \sum_{i,j=1}^2 \frac{1}{4\pi} \int_\C f^{ij}(z) \langle T_{ij}(z) \prod_{k=1}^N V_{\alpha_k}(z_k)  \rangle_g \opn{vol}_g(d^2z)\,, 
\end{align}
where $\opn{vol}_g(d^2z)$ is the volume form of $g$ and $T_{ij}$ is called the \emph{stress-energy tensor}. In a CFT two of the components of $T$ are nontrivial, see \cite{Gaw}. In the $(z,\bar z)$ coordinates they are $T(z) := T_{zz}(z)$ and $\bar T(z) := T_{\bar z \bar z}(z)$. Then according to Belavin-Polyakov-Zamolodchikov \cite{BPZ} the Conformal Ward identities for any non-negative integers $M$ and $N$ are
\begin{align}\label{ward_identities}
\langle T(\zeta) \prod_{i=1}^M T(\zeta_i) \prod_{j=1}^N V_{\alpha_j}(z_j) \rangle_g  &=  \sum_{i=1}^M \left( \frac{2}{(\zeta-\zeta_i)^2} + \frac{1}{\zeta-\zeta_i} \frac{\partial}{\partial \zeta_i} \right) \langle  \prod_{i=1}^M T(\zeta_i) \prod_{j=1}^N V_{\alpha_j}(z_j) \rangle_g \nonumber \\
& \quad + \sum_{j=1}^N \left( \frac{\Delta_{\alpha_i}}{(\zeta-z_j)^2} + \frac{1}{\zeta-z_j} \frac{\partial}{\partial z_j} \right)  \langle  \prod_{i=1}^M T(\zeta_i) \prod_{j=1}^N V_{\alpha_j}(z_j) \rangle_g \nonumber \\
& \quad + \sum_{k=1}^M \frac{c_L}{(\zeta-\zeta_k)^4} \langle \prod_{i=1;i \neq k}^M T(\zeta_i) \prod_{j=1}^N V_{\alpha_j}(z_j) \rangle_g \,. \nonumber
\end{align}
Here $\Delta_\alpha = \frac{\alpha}{2}(Q-\frac{\alpha}{2})$ and $c_L = 1+6Q^2$ is the \emph{central charge} of LCFT. The other nontrivial component $\bar T$ is supposed to satisfy similar identities where each $T$ is swapped for $\bar T$ and the points $\zeta, \zeta_i, z_j$ are swapped for their complex conjugates. In \cite{ward} these identities were proven for $M \in \{0,1\}$. In the proof the authors defined $T(z) = Q \partial^2_z \phi(z) - ((\partial_z \phi(z))^2 + \E[(\partial_z X(z))^2]$ via a regularization procedure and then computed (\ref{ward_identities}) by Gaussian integration by parts. The computations get quite lengthy and thus to prove the identities for all $M$ one should take the variational relation (\ref{se_tensor}) as the definition of $T$.

In the variational computation one should use the fact that two smooth metrics $g'$ and $g$ on $\rs$ are related by 
\begin{align}
g' &= \psi^* (e^\varphi g)\,,
\end{align}
where $\psi: \rs \to \rs$ is a diffeomorphism, $\psi^*$ is the associated pullback and $\varphi: \rs \to \R$ is a smooth function. This means that on the Riemann sphere two metrics are equivalent modulo a diffeomorphism and a conformal factor $e^\varphi$. On higher genus surfaces one has to also take into account the moduli space.

The dependency of $\cor_g$ on $\varphi$ is explicitly given by the Weyl anomaly \cite{DKRV} (Theorem 3.11) and thus the differentiation with respect to this factor is easy. The only thing left to do is to investigate the $\psi$ dependency and for this the smoothness of the correlation functions is needed. In \cite{Gaw} (Lecture 2) this computation is done in a general axiomatic CFT setting where the author assumes the Weyl anomaly, diffeomorphism covariance and some regularity for the correlation functions.

The Ward identities are needed for the construction of representations of the Virasoro algebra. For this the canonical construction of the Hilbert space associated to the LCFT should be carried out and then the generators of the Virasoro algebra should act on some dense subspace of this space, see \cite{Kup}. This will be carried out in a future work.

The other set of partial differential equations that the correlation functions are supposed to satisfy are the \emph{Belavin-Polyakov-Zamolodchikov equations} (BPZ equations). More precisely, the correlation function with the $(r,1)$-degenerate field $\langle V_{- \frac{(r-1) \gamma}{2}} (z) \prod_{i=1}^N V_{\alpha_i}(z_i) \rangle$ is supposed to satisfy the equation (see Section 2 of \cite{BSA})
\begin{align}
\mathcal{D}_r \langle V_{- \frac{(r-1) \gamma}{2}} (z) \prod_{i=1}^N V_{\alpha_i}(z_i) \rangle &= 0\,,
\end{align}
where the differential operator $\mathcal{D}_r$ is given by
\begin{align*}
\mathcal{D}_r &= \sum_{k=1}^r \underset{n_1+\hdots+n_k=r}{ \sum_{(n_1,\hdots,n_k) \in \N^{k}}} \frac{(\frac{\gamma^2}{4})^{r-k}}{\prod_{j=1}^{k-1} (\sum_{i=1}^j n_i)(\sum_{i=j+1}^k n_i)} L_{-n_1} \hdots L_{-n_k}\,,
\end{align*}
and
\begin{align*}
L_{-1} &= \partial_z\,, \\
L_{-n} &= \sum_{i=1}^N \left( - \frac{1}{(z_i-z)^{n-1}} \partial_{z_i} + \frac{\Delta_{\alpha_i} (n-1)}{(z_i-z)^n} \right)\,, \quad n \geq 2\,.
\end{align*}
The degenerate field of order $(1,r)$, given by $\langle V_{- \frac{2(r-1)}{\gamma}} (z) \prod_{i=1}^N V_{\alpha_i}(z_i) \rangle$, is supposed to satisfy a similar equation where $\frac{\gamma}{2}$ gets replaced by $\frac{2}{\gamma}$.

In \cite{ward} the BPZ equations were proven for the $(2,1)$ and $(1,2)$ degenerate fields by using Gaussian integration by parts. The BPZ equations are essential for proving integrability of LCFT. They were used in the proof of the DOZZ-formula \cite{dozz, dozz2} for the $3$-point function of LCFT on the sphere, and after this similar methods were used for obtaining integrability results for one dimensional GMC measures on the unit circle \cite{Remy2} and on the unit interval \cite{Remy3}. The unit circle computation was based on a boundary LCFT, which is defined in \cite{disk}. The connection between the unit interval computation and LCFT is not clear, although the methods used are very similar to the methods used in \cite{dozz}.

\textbf{Acknowledgements.} I would like to thank Antti Kupiainen and Yichao Huang for many fruitful discussions and for giving comments on the manuscript. This project has received funding from the ERC Advanced Grant 741487 (QFPROBA).

\section{Mathematical background}

In this section we quickly review the rigorous definitions behind the probabilistic approach to quantum Liouville theory. Similar discussions can be found in \cite{Kup, Var, DKRV, ward, dozz}.

\subsection{Gaussian Free Field and Gaussian Multiplicative Chaos}

In this paper we will work with the round metric given by
\begin{align}\label{round_metric}
g(z) &= \frac{4}{(1+z \bar z)^2}\,.
\end{align}
For this choice the scalar curvature is constant $R_g(z) =  2$ for all $z \in \rs$.

The usual starting point for defining the measure $e^{-S_L(X)} D X$ is to separate the free field part $\int |\partial_z X|^2  \, dz$ and to think of the measure as
\begin{align}\label{intro_measure}
e^{-  \int ( \frac{1}{4 \pi} Q R_g(z) X(z) + \mu e^{\gamma X(z)}) g(z) \, d^2z } e^{- \frac{1}{\pi}  \int  |\partial_z X(z)|^2 \, d^2z} \, D X\,.
\end{align}
Now the $e^{- \frac{1}{\pi}  \int  |\partial_z X(z)|^2  \, d^2z} \, D X$ factor can be naturally thought of as the (non-normalized) distribution of the Gaussian Free Field (GFF), which formally is a Gaussian process $(X(z))_{z \in \C}$ with covariance
\begin{align}\label{round_covariance}
\E X(z_1) X(z_2) &= C_g(z_1,z_2) :=  \ln \frac{1}{|z_1-z_2|}  - \frac{1}{4} (\ln  g(z_1) + \ln g(z_2)) + \ln 2 - \frac{1}{2}\,,
\end{align} 
and with vanishing mean over the Riemann Sphere
\begin{align}\label{zero_mean}
\int_\C X(z) g(z) \, d^2 z = 0\,.
\end{align}
In other words $C_g$ is the zero mean Green function of the Laplace--Beltrami operator $\Delta_g$.

From (\ref{round_covariance}) we see that the variance $\E X(z_1)^2$ is infinite. This means that in reality the GFF is a random generalized function rather than a function. More precisely, the probability law of $X$ lives on the negative order Sobolev space $\dsob$ which is the continuous dual of $\sob$ (see Appendix \ref{sobolev_appendix} for definitions). Thus $X$ is a Gaussian process $(X(f))_{f \in \sob}$ with covariance given by
\begin{align*}
\E[X(f)X(h)] &= \int_{\C^2} f(x) h(y) C_g(x,y) \, g(x) g(y) d^2 x d^2 y\,, \quad (f,h \in \sob)\,.
\end{align*}
Because of the zero mean property of $X$, we could also think of it as a process $(X(f))_{f \in \sobo}$ indexed by the subspace $\sobo \subset \sob$ of zero mean functions. The fact that $X$ is a random generalized function poses a problem in defining the measure (\ref{intro_measure}) since we would like to think of it as the probability distribution of the GFF, multiplied by some Radon--Nikodym derivative, but now the term $e^{\gamma X(z)}$ becomes ill-defined since the exponential of a generalized function is not defined. This is where the theory of Gaussian Multiplicative Chaos steps in, since it provides a framework for defining exponentials of logarithmically correlated Gaussian fields. This work goes back to Kahane \cite{Kah}. For a more recent review see \cite{review}.

We define $e^{\gamma X(z)} g(z) d^2 z$ to be the $\eps \to 0$ limit of the measures
\begin{align}\label{regularized_chaos}
M_{\gamma,\eps} (d^2z) &:=  e^{\gamma X_\eps(z) - \frac{\gamma^2}{2} \E [X_\eps(z)^2]} g(z) \, d^2 z\,,
\end{align}
where $d^2 z$ is the Lebesgue measure on $\rs$ and $X_\eps(z)$ denotes a regularization which we choose to be a smooth mollification (another common regularization is the circle average). More precisely let $\rho$ be a non-negative $C^\infty(\R)$ function with compact support and define $\rho_\eps(z) = \eps^{-2} \rho(|z|^2/\eps^2)$. Then the regularization of $X$ is defined by $X_\eps = \rho_\eps * X$. We also adapt the notation
\begin{align*}
\frac{1}{(x)_{\eps,\eps}} &:= \rho_\eps * \rho_\eps * \frac{1}{x}\,.
\end{align*}
In \cite{Ber} it was shown that for $\gamma \in (0,2)$ the measures (\ref{regularized_chaos}) converge weakly in probability as $\eps \to 0$ and we denote the limit by $M_\gamma(d^2 z)$. This measure is called the GMC associated to $X$ with respect to the measure $g(z) d^2 z$. The following result goes back to Kahane \cite{Kah} (Lemma 1).

\begin{proposition}\label{kahane} (Kahane Convexity Inequality)
Let $X$ and $Y$ be two continuous Gaussian fields on $\rs$ such that for all $x,y \in \rs$
\begin{align*}
\E[X(x)X(y)] & \leq \E[Y(x)Y(y)]\,.
\end{align*}
Then for all convex $F: \R_+ \to \R$ with at most polynomial growth at infinity and $f: \rs \to \R_+$ we have
\begin{align*}
\E \left[ F \left( \int_\rs f(z) e^{\gamma (X(z) - \frac{\E[X(z)^2]}{2})} \, d^2z  \right)  \right] & \leq \E \left[ F \left( \int_\rs f(z) e^{\gamma (Y(z) - \frac{\E[Y(z)^2]}{2})} \, d^2z  \right)  \right]\,.
\end{align*}
\end{proposition}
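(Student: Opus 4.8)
The plan is to prove the inequality by a Gaussian interpolation argument. Realize $X$ and $Y$ on a common probability space as independent fields and, for $t \in [0,1]$, set $Z_t := \sqrt{t}\,Y + \sqrt{1-t}\,X$, so that $Z_0 = X$ and $Z_1 = Y$. Writing $R(x,y) := \E[Y(x)Y(y)] - \E[X(x)X(y)] \geq 0$, one computes from independence that $\E[Z_t(x)Z_t(y)] = \E[X(x)X(y)] + t\,R(x,y)$ and, for the velocity $\dot Z_t := \frac{d}{dt}Z_t$, the covariances $\E[\dot Z_t(x) Z_t(y)] = \tfrac12 R(x,y)$ and $\frac{d}{dt}\E[Z_t(z)^2] = R(z,z)$. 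Define the interpolated chaos integral $W_t := \int_\rs f(z)\,e^{\gamma Z_t(z) - \frac{\gamma^2}{2}\E[Z_t(z)^2]}\,d^2z$, using exactly the normalization of the regularized chaos (\ref{regularized_chaos}). The goal becomes showing $\frac{d}{dt}\E[F(W_t)] \geq 0$, whence $\E[F(W_0)] \leq \E[F(W_1)]$.

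The key computation is to differentiate under the expectation and integrate the Gaussian $\dot Z_t(w)$ by parts. Differentiating the density produces the factor $\gamma\dot Z_t(w) - \frac{\gamma^2}{2}R(w,w)$; Gaussian integration by parts applied to $\dot Z_t(w)$ against the functional $F'(W_t)\,e^{\gamma Z_t(w)-\frac{\gamma^2}{2}\E[Z_t(w)^2]}$ yields two contributions, one where the derivative hits $W_t$ and one ``diagonal'' contribution where it hits the density at $w$ itself. The diagonal term equals $\frac{\gamma^2}{2}R(w,w)$ times $\E[F'(W_t)\cdots]$ and exactly cancels the explicit $-\frac{\gamma^2}{2}R(w,w)$ term; this cancellation is precisely why the Wick normalization $-\frac{\gamma^2}{2}\E[Z_t^2]$ is needed. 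What survives is
\begin{align*}
\frac{d}{dt}\E[F(W_t)] &= \frac{\gamma^2}{2}\int_{\rs^2} f(w)f(v)\,R(w,v)\,\E\big[F''(W_t)\,e^{\gamma Z_t(w)-\frac{\gamma^2}{2}\E[Z_t(w)^2]}\,e^{\gamma Z_t(v)-\frac{\gamma^2}{2}\E[Z_t(v)^2]}\big]\,d^2w\,d^2v\,.
\end{align*}
Since $F$ is convex ($F''\geq 0$), $f\geq 0$, the exponentials are positive, and $R\geq 0$ by hypothesis, the integrand is pointwise non-negative, giving $\frac{d}{dt}\E[F(W_t)]\geq 0$.

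The main obstacle is rigor: the fields are only continuous and $F$ is merely convex, so the functional derivatives and the $\delta$-function diagonal term above are only formal. I would discretize, fixing a fine partition of $\rs$ and replacing $W_t$ by a Riemann sum $W_t^{(n)} = \sum_i f(x_i)\,e^{\gamma Z_t(x_i) - \frac{\gamma^2}{2}\E[Z_t(x_i)^2]}\,\Delta_i$, so that $W_t^{(n)}$ is a smooth function of the finite Gaussian vector $(Z_t(x_i))_i$ and the elementary finite-dimensional integration-by-parts formula applies with no distributional subtleties; the cancellation and the non-negative surviving sum then go through verbatim. To handle non-smooth $F$ I would first mollify it into a smooth convex function of at most polynomial growth, and to pass $n\to\infty$ and remove the mollification I would use continuity of the sample paths together with dominated convergence, the polynomial growth of $F$ and the finiteness of low moments of the chaos integrals supplying an integrable dominating function. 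Care is also needed at the endpoints $t\in\{0,1\}$, where $\dot Z_t$ is singular; this never enters the final identity because only the bounded kernel $R$ appears, so I would integrate the derivative over a compact subinterval and let its endpoints tend to $0$ and $1$.
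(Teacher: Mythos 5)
The paper does not actually prove this proposition: it is stated as a quotation of Kahane's Lemma 1 in \cite{Kah} and used as a black box. Your interpolation argument is, in substance, the classical proof of that lemma (and of the closely related Gaussian comparison inequalities), and it is correct: the covariance computations for $Z_t=\sqrt{t}\,Y+\sqrt{1-t}\,X$ are right, the cancellation of the diagonal term against the derivative of the Wick normalization is the crux, and the surviving $F''\cdot R\cdot(\text{positive})$ integrand gives monotonicity of $t\mapsto\E[F(W_t)]$. Your plan for making it rigorous (discretize to a finite Gaussian vector, use finite-dimensional Gaussian integration by parts, mollify $F$, pass to the limit by continuity of paths and polynomial growth) is exactly the standard route. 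Two small points worth flagging. First, your computation silently uses the Wick normalization $e^{\gamma Z_t-\frac{\gamma^2}{2}\E[Z_t^2]}$, whereas the statement as printed has $e^{\gamma(X-\frac{1}{2}\E[X^2])}=e^{\gamma X-\frac{\gamma}{2}\E[X^2]}$; with the printed normalization the diagonal term would not cancel, so you are proving the (surely intended) Wick-normalized version, consistent with \eqref{regularized_chaos}. Second, the paper applies the inequality to $F(x)=x^{-q}$, which is convex with polynomial growth at infinity but singular at $0$; your mollification step should be supplemented by a monotone truncation near $0$ (e.g.\ replacing $F$ by $x\mapsto F(x+\epsilon)$ and letting $\epsilon\downarrow 0$), a routine but necessary addition for the use made of the proposition later in the paper.
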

When applying Kahane convexity to the GFF one has to use the regularized field $X_\eps$ because of the continuity assumption, but usually this is not a problem.

\subsection{Liouville correlation functions}

We start by defining the path integral (\ref{intro_path}) by setting
\begin{align}\label{rigorous_path}
\langle F \rangle &:= 2 \int_\R e^{-2Qc} \E [ F(X + \tfrac{Q}{2} \ln g(z) + c) e^{- \mu e^{\gamma c } M_\gamma(\C)} ] \, dc\,.
\end{align}
Here $\E$ is the expectation with respect to the GFF $X$ and the integral over $c$ corresponds to a zero mode\footnote{Now $c+X$ is a field where $X$ is the GFF with zero mean and $c$ is distributed according to the Lebesgue measure. Thus the law of $c+X$ is the pushforward of $dc \otimes d \mu_X$ under the map $(c,X) \mapsto c+X$ where $d \mu_X$ is the law of $X$. Note that this is not a finite measure. This field is sometimes called the Massless Free Field.

Another way to view this is to recall that the GFF has the series representation
\begin{align*}
X &= \sqrt{2\pi} \sum_{n=1}^\infty X_n \frac{e_n}{\sqrt{\lambda_n}}\,,
\end{align*}
where $X_n$ are i.i.d. standard Gaussians, and $e_n$ and $\lambda_n$ are the eigenfunctions and eigenvalues of $- \Delta_g$, respectively. Then the zero mode corresponds to adding the constant eigenfunction $e_0$ into this series. This has eigenvalue $0$ so the term in the series would be $\frac{X_0}{\sqrt{0}} e_0$. If we interpret $\frac{X_0}{\sqrt{0}}$ as a Gaussian with infinite variance, that is, the Lebesgue measure, then we end up with the random field $c+X$.}. The observable $F: \dsob \to \R$ is arbitrary as long as the integral converges. This definition is the same as the one given in \cite{dozz}, and differs slightly from the original definition in \cite{DKRV}. The formula comes essentially from plugging in the field $ \phi = c+X + \frac{Q}{2} \ln g$ into a path integral with the regularized action $\int ( \frac{1}{\pi} |\partial_z \phi|^2 + \mu \eps^{\frac{\gamma^2}{2}} e^{\gamma \phi_\eps}) \, d^2 z$ with the Euclidean metric $g \equiv 1$ and adding an integration over the zero mode $c$ and taking the $\eps \to 0$ limit. Absorbing the $g$ dependency to the field via adding the $\frac{Q}{2} \ln g$ term is common, see for example the discussion in Section 3.4 of \cite{rv_lectures}.

Recall that $V_\alpha(z) = e^{\alpha \phi(z)}$, so the vertex operators correspond to the choice $F(X) = e^{\alpha X(z)}$ in (\ref{rigorous_path}). To define $\langle F \rangle$ rigorously for this choice of $F$ one has to regularize the exponential in a similar manner as in the GMC definition above. Thus we define
\begin{align}\label{correlations}
\cor &:= \lim_{\eps \to 0} \langle \prod_{i=1}^N V_{\alpha_i,\eps}(z_i) \rangle_\eps = \lim_{\eps \to 0} 2 \int_\R e^{-2Qc} \E [ \prod_{i=1}^N V_{\alpha_i,\eps}(z_i) e^{- \mu e^{\gamma c } M_{\gamma,\eps} (\C)} ] \, dc\,,
\end{align}
where $\bz = (z_1,\hdots,z_N) \in U_N = \{ (z_1,\hdots,z_N) \in \C^N : z_i \neq z_j, \forall \, i \neq j\}$, and 
\begin{align*}
V_{\alpha,\eps}(z) &= \eps^{\frac{\alpha^2}{2}} e^{\alpha(X_\eps(z) + \frac{Q}{2} \ln g(z) + c) }\,.
\end{align*}
The definition \eqref{correlations} differs from the one given in \cite{DKRV} by a factor of 2 since we decide to match the definition given in \cite{dozz} (which follows the definition in the physics literature).

From now on we denote
\begin{align*}
G_\eps(\bz) &:= \core \,, \quad G_\eps(\bx;\bz) := \langle \prod_{i=1}^n V_{\gamma,\eps} (x_i) \prod_{j=1}^N V_{\alpha_j,\eps}(x_j) \rangle_\eps\,.
\end{align*}
By $G(\bz)$ and $G(\bx;\bz)$ we denote the corresponding $\eps \to 0$ limits. In \cite{DKRV} it was shown that $G(\bz)$ exists and is non-zero if and only if the momenta $\alpha_i$ satisfy the \emph{Seiberg bounds}
\begin{align}\label{seiberg}
\sum_{i=1}^N \alpha_i - 2Q > 0 \,, \quad \alpha_i < Q \quad \forall i \in \{1,\hdots,N\}\,.
\end{align}
In particular this implies that we need $N \geq 3$. By performing the $c$ integral and using the Cameron--Martin theorem we arrive at
\begin{align}\label{reduction_to_chaos}
G(\bz) &= 2 B(\alpha) \mu^{-s} \gamma^{-1} \Gamma(s) \prod_{i<j} \frac{1}{|z_i-z_j|^{\alpha_i \alpha_j}} \E \left[ \left( \int_\C \mathcal{F} (x,\bz) M_\gamma(d^2 x) \right)^{-s}  \right]\,,
\end{align}
where $s = \frac{\sum_{i=1}^N \alpha_i - 2Q}{\gamma}$, $B(\alpha) = 4 e^{- \frac{ \ln 2 - \frac{1}{2}}{2} (s \gamma)^2}$ and
\begin{align*}
\mathcal{F} (x,\bz) &= \prod_{i=1}^N \left( \frac{g(x)^{-\frac{1}{4}}}{|x-z_i|} \right)^{\gamma \alpha_i}\,.
\end{align*}
Thus the correlation functions can be expressed as integrals of explicit functions against the GMC measure. This was initially shown in \cite{DKRV} and by using this formula it is possible to derive fusion estimates that tell us the singular behaviour of $G$ when two of the points $z_i$ merge \cite{ward}. In the smoothness proof we will use a slight generalization of the fusion estimate from \cite{ward} (see Section \ref{section_fusion}). The correlation functions satisfy the following useful integral identity.

\begin{lemma}\label{kpz} (KPZ-identity)
For all $\eps > 0$ and $\bz \in U_N$ we have
\begin{align*}
\mu \gamma \int_\C G_\eps (x;\bz) \,d^2x &= \left(  \sum_{i=1}^N \alpha_i - 2Q \right) G_\eps(\bz)\,.
\end{align*}
The same identity holds when $G_\eps$ is replaced by $G$.
\end{lemma}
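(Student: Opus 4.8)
\emph{Strategy.} The plan is to prove the exact finite-$\eps$ identity by integration by parts in the zero mode $c$, and then obtain the statement for $G$ by sending $\eps\to 0$. First I would unfold both sides using the probabilistic representation \eqref{correlations}. Collecting the $c$-independent random prefactor into
\[
A := \prod_{i=1}^N \eps^{\alpha_i^2/2}\, e^{\alpha_i(X_\eps(z_i)+\frac{Q}{2}\ln g(z_i))},
\]
the definition reads $G_\eps(\bz) = 2\,\E\int_\R A\, e^{(\sum_{i=1}^N\alpha_i - 2Q)c}\, e^{-\mu e^{\gamma c}M_{\gamma,\eps}(\C)}\,dc$, while the extra insertion in $G_\eps(x;\bz)$ contributes the factor $V_{\gamma,\eps}(x) = \eps^{\gamma^2/2} e^{\gamma(X_\eps(x)+\frac{Q}{2}\ln g(x))} e^{\gamma c}$.

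\emph{Key algebraic step.} The crucial observation is that integrating the extra vertex operator over $x$ reproduces, up to the factor $\mu\gamma e^{\gamma c}$, exactly the regularized mass sitting in the exponential interaction. By the definitions of $V_{\gamma,\eps}$ and $M_{\gamma,\eps}$ --- this is where the normalization $\eps^{\alpha^2/2}$ together with the weight $g^{Q/2}$ carried by $\phi$ is tuned to the Wick renormalization of the chaos --- one has $\mu\gamma\int_\C V_{\gamma,\eps}(x)\,d^2x = \mu\gamma\, e^{\gamma c} M_{\gamma,\eps}(\C)$. Since
\[
-\frac{d}{dc}\, e^{-\mu e^{\gamma c}M_{\gamma,\eps}(\C)} = \mu\gamma\, e^{\gamma c} M_{\gamma,\eps}(\C)\, e^{-\mu e^{\gamma c}M_{\gamma,\eps}(\C)},
\]
inserting $\mu\gamma\int_\C V_{\gamma,\eps}(x)\,d^2x$ amounts precisely to acting by $-\partial_c$ on the interaction exponential.

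\emph{Integration by parts.} Writing $M:=M_{\gamma,\eps}(\C)$ and differentiating the full integrand,
\[
\frac{d}{dc}\Big(A\, e^{(\sum_i\alpha_i-2Q)c} e^{-\mu e^{\gamma c}M}\Big) = \Big(\textstyle\sum_i\alpha_i-2Q\Big) A\, e^{(\sum_i\alpha_i-2Q)c} e^{-\mu e^{\gamma c}M} - \mu\gamma\, e^{\gamma c} M\, A\, e^{(\sum_i\alpha_i-2Q)c} e^{-\mu e^{\gamma c}M},
\]
I would integrate over $c\in\R$. The left-hand side is a total derivative and contributes only boundary values at $c=\pm\infty$, both of which vanish: as $c\to-\infty$ the Seiberg lower bound $\sum_i\alpha_i>2Q$ from \eqref{seiberg} forces $e^{(\sum_i\alpha_i-2Q)c}\to 0$, and as $c\to+\infty$ the factor $e^{-\mu e^{\gamma c}M}\to 0$ super-exponentially (using $M>0$ a.s.). Applying $2\E$ to the surviving identity, the first term gives $(\sum_i\alpha_i-2Q)G_\eps(\bz)$ and the second gives $\mu\gamma\int_\C G_\eps(x;\bz)\,d^2x$ by the key step and Fubini, which yields the claim. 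At finite $\eps$ the field $X_\eps$ is smooth with finite variance, so differentiation under $\E$ and the use of Fubini are routine; the only genuine analytic point here is the integrability of the $c$-integral near $c=-\infty$, which is exactly what the Seiberg bound provides.

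\emph{Passage to $G$ and main obstacle.} For the statement with $G$ in place of $G_\eps$, I would send $\eps\to 0$: $G_\eps(\bz)\to G(\bz)$ holds by definition, so the task is to justify $\int_\C G_\eps(x;\bz)\,d^2x\to\int_\C G(x;\bz)\,d^2x$. This is the delicate part, and I expect it to be the main obstacle, since the extra $\gamma$-insertion sits at the edge of integrability both as $x$ collides with some $z_i$ and as $x\to\infty$. The plan is to use the fusion estimates of Section \ref{section_fusion} to control the singularity of $G_\eps(x;\bz)$ as $x\to z_i$ and its decay as $x\to\infty$ (where the round-metric weight supplies integrable decay), producing an $\eps$-uniform integrable majorant on $\C$ and hence dominated convergence.
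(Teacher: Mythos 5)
Your argument is correct, but it reaches the identity by a different mechanism than the paper. You integrate by parts in the zero mode $c$: the integral over $\R$ of the total $c$-derivative of the integrand vanishes (the boundary term at $c=-\infty$ dying by the Seiberg bound $\sum_i\alpha_i>2Q$ and the one at $c=+\infty$ by the double-exponential suppression, using $M_{\gamma,\eps}(\C)>0$), and the two surviving terms are exactly the two sides of the KPZ identity once one recognizes $\mu\gamma\,e^{\gamma c}M_{\gamma,\eps}(\C)=\mu\gamma\int_\C V_{\gamma,\eps}(x)\,d^2x$. The paper instead differentiates in the cosmological constant $\mu$: the substitution $c'=c+\gamma^{-1}\ln\mu$ exhibits $G_\eps$ as $\mu^{-s}$ times a $\mu$-independent quantity with $s=(\sum_i\alpha_i-2Q)/\gamma$, so $\mu\,\partial_\mu G_\eps=-sG_\eps$, while differentiating the original formula under the integral sign gives $\partial_\mu G_\eps=-\int_\C G_\eps(x;\bz)\,d^2x$. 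The two computations are really two faces of the same translation/scaling covariance of the $c$-integral, and both hinge on the same identification of the extra $\gamma$-insertion with the regularized interaction term (which you make explicit and the paper leaves implicit). Your route avoids having to justify differentiation in $\mu$ under $\E$ and $\int dc$, at the cost of checking the boundary terms; neither step is delicate at finite $\eps$. For the passage to $G$, both you and the paper invoke dominated convergence with an $\eps$-uniform integrable dominant for $G_\eps(x;\bz)$; the paper simply cites the estimate from Kupiainen--Rhodes--Vargas rather than rederiving it from the fusion bounds as you propose, but the content is the same.
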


\begin{proof}
After the change of variables $c' = c + \gamma^{-1} \ln \mu$ the integral (\ref{correlations}) becomes
\begin{align*}
\core &= 2 \mu^{- \frac{\sum_{i=1}^N \alpha_i-2Q}{\gamma}} \int_\R e^{-2Qc'} \E \left[ \prod_{i=1}^N V_{\alpha_k, \eps} (z_i) e^{- M_{\gamma,\eps}(\C)} \right] \, dc'\,.
\end{align*}
Thus
\begin{align*}
\frac{d}{d\mu} \core &=  \frac{ 2Q - \sum_{i=1}^N \alpha_i }{\mu \gamma} \core\,.
\end{align*}
On the other hand, by differentiating (\ref{correlations}) with respect to $\mu$ we get
\begin{align*}
\frac{d}{d\mu} \core &= - \int_\C G_\eps(x;\bz) \, d^2x\,.
\end{align*}
The identity for the $\eps \to 0$ limits follows from Dominated Convergence, since in \cite{ward} it was shown that $G_\eps$ has an integrable dominant uniformly in $\eps$.
\end{proof}

\subsection{The first derivative of the correlation functions}

Throughout this section we assume that the insertion points are distinct, that is, $\bz \in U_N = \{ (z_1,\hdots,z_N) \in \C^N : z_i \neq z_j, \forall \, i \neq j\}$. In \cite{ward} it was shown by using Gaussian integration by parts that 
\begin{align}\label{derivative}
\partial_{z_i} G_\eps(\bz) &= - \frac{1}{2} \sum_{j=1; j \neq i}^N \frac{\alpha_i \alpha_j}{(z_i-z_j)_{\eps,\eps}} G_\eps(\bz) + \frac{\alpha_i \mu \gamma}{2} \int_\C \frac{G_\eps(x;\bz)}{(z_i-x)_{\eps,\eps}} \, d^2x\,.
\end{align}
A priori we do not know if the $\eps \to 0$ limit of the integral exists. If we just take the $\eps \to 0$ limit of the integrand, the resulting integral does not converge absolutely, thus studying the limit is subtle. In \cite{ward} the convergence was shown by studying the integral transform
\begin{align*}
A_\eps(u) &= \int_\C \frac{G_\eps(x;\bz)}{(u-x)^2 } \, d^2 x := \lim_{\delta \to 0} \int_\C \frac{G_\eps(x;\bz)}{(u-x)^2} \mathbf{1}_{|u-x| > \delta} d^2x \,.
\end{align*}
The limit on the right-hand side exists since the regularized correlation function $G_\eps$ is smooth. By integration by parts we get
\begin{align*}
A_\eps(u) &= - \int_\C \frac{\partial_x G_\eps(x;\bz)}{u-x} \, d^2 x  \\
&= \int_\C \frac{1}{u-x} \Bigg( \sum_{j=1}^N \frac{\alpha_j \gamma}{2} \frac{G_\eps(x;\bz)}{(x-z_j)_{\eps,\eps}} - \frac{\mu \gamma^2}{2} \int_\C \frac{G_\eps(x,y;\bz)}{(x-y)_{\eps,\eps}} \, d^2 y \Bigg) \, d^2 x\,.
\end{align*}
If we contour integrate this relation along the contour $\partial B(z_i,r)$, we get
\begin{align*}
\oint_{\partial B(z_i,r)} A_\eps(u) \, du = 2 \pi i \int_\C \mathbf{1}_{B(z_i,r)}(x) \Bigg( \sum_{j=1}^N \frac{\alpha_j \gamma}{2} \frac{G_\eps(x;\bz)}{(x-z_j)_{\eps,\eps}} - \frac{\mu \gamma^2}{2} \int_\C \frac{G_\eps(x,y;\bz)}{(x-y)_{\eps,\eps}} \, d^2 y \Bigg) \, d^2 x\,,
\end{align*}
where we used the Residue Theorem. On the other hand, the left-hand side can be written as
\begin{align*}
\oint_{\partial B(z_i,r)} A_\eps(u) \, du &= - \oint_{\partial B(z_i,r)} \int_\C \frac{\partial_x G_\eps (x;\bz)}{u-x} \, d^2 x \, du \\
&= - 2 \pi i \int_\C \mathbf{1}_{B(z_i,r)}(x) \partial_x G_\eps(x;\bz) \, d^2 x  
\end{align*}
We end up with the fundamental identity
\begin{align}\label{fundamental_identity}
\int_{B(z_i,r)} \partial_x G_\eps(x;\bz) \, d^2 x  = \int_{B(z_i,r)}   \left( - \sum_{j=1}^N \frac{\alpha_j \gamma}{2} \frac{ G_\eps(x;\bz)}{(x-z_j)_{\eps,\eps}}  + \frac{\mu \gamma^2}{2} \int_\C \frac{G_\eps(x,y;\bz)}{(x-y)_{\eps,\eps}} \, d^2y \right) d^2x \,.
\end{align}
The last term can be written as
\begin{align*}
\int_{\C^2} \mathbf{1}_{B(z_i,r)}(x) \frac{G_\eps(x,y;\bz)}{(x-y)_{\eps,\eps}} \,d^2 y \, d^2x  &= \int_{\C^2} \mathbf{1}_{B(z_i,r)}(x) (\mathbf{1}_{B(z_i,r)}(y) + \mathbf{1}_{B(z_i,r)^c}(y)) \frac{G_\eps(x,y;\bz)}{(x-y)_{\eps,\eps}} \,d^2 y \, d^2x \\
&= \int_{\C^2} F_{1,\eps}(x,y) G_\eps(x,y;\bz) \, d^2 y \, d^2x\,,
\end{align*}
where $F_{1,\eps}(x,y) = \frac{\mathbf{1}_{B(z_i,r)}(x) \mathbf{1}_{B(z_i,r)^c}(y) }{(x-y)_{\eps,\eps}}$. The integral over $\mathbf{1}_{B(z_i,r)}(y)$ vanishes since $\frac{G(x,y;\bz)}{(x-y)_{\eps,\eps}}$ is antisymmetric in $x$ and $y$. The crucial observation is that the $\eps \to 0$ limit of the integrand will turn out to be absolutely integrable. Now (\ref{fundamental_identity}) leads to
\begin{align*}
\frac{\alpha_i \gamma}{2} \int_{B(z_i,r)} \frac{G_\eps(x;\bz)}{(x-z_i)_{\eps,\eps}} \, d^2 x &= - \sum_{j=1;j\neq i}^N \frac{\alpha_j \gamma}{2} \int_{B(z_i,r)} \frac{G_\eps(x;\bz)}{(x-z_j)_{\eps,\eps}} \, d^2x  \\
& \quad + \frac{\mu \gamma^2}{2} \int_{\C^2} F_{1,\eps}(x,y) G_\eps(x,y;\bz) \, d^2y \,d^2 x  \\
& \quad - \oint_{\partial B(z_i,r)} G_\eps(x;\bz) \, d  x\,,
\end{align*}
where the last term comes from integrating by parts the left-hand side of (\ref{fundamental_identity}). From this we can take the $\eps \to 0$ limit and thus we have demonstrated that the $\eps \to 0$ limit of the last term in \eqref{derivative} exists and we see that $\bz \mapsto G(\bz)$ is $C^1$ (checking the $\partial_{\bar z_i}$-derivative works the same way). In the end we obtain
\begin{align}\label{c1}
\lim_{\eps \to 0} \frac{\alpha_i \gamma}{2} \int_\C \frac{G_\eps(x;\bz)}{(x-z_i)_{\eps,\eps}} \, d^2 x &= \int_{\C^2} F(x,y) G(x,y;\bz) \,d^2 x \, d^2 y + \oint_{\partial B(z_i,r)} G(x;\bz) \, d x\,,
\end{align}
where 
\begin{align*}
F(x,y) &= \frac{\mu \gamma^2}{2} F_1(x,y) - \sum_{j=1;j \neq i}^N \left( \frac{\alpha_j \gamma }{2} \frac{\mu}{s+1} \frac{\mathbf{1}_{B(z_i,r)}(x)}{x-z_j} \right) + \frac{\alpha_i \gamma}{2} \frac{\mu}{s+1} \frac{\mathbf{1}_{B(z_i,r)^c}(x)}{x-z_i}\,,
\end{align*}
where $s = \frac{\sum_{j=1}^N \alpha_j - 2Q}{\gamma}$ comes from the KPZ-identity (Lemma \ref{kpz}). The integral
\begin{align*}
\int_{\C^2} F_1(x,y) G(x,y;\bz) \, d^2 x \, d^2 y
\end{align*}
is convergent by the two-point fusion estimate from \cite{ward} (Proposition 5.1) and the following lemma.
\begin{lemma}\label{integral}
Let $K$ be a bounded set containing the origin and $B = B(0,r)$ for some $r>0$. Then
\begin{align*}
\int_K \frac{\mathbf{1}_B(x) \mathbf{1}_{B^c}(y)}{|x-y|^a} \, d^2 x \, d^2 y < \infty
\end{align*}
for $a <3$.
\end{lemma}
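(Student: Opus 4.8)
The plan is to exploit the geometric separation between $B$ and $B^c$ to gain integrability beyond the naive two-dimensional threshold $a<2$. The factor $\mathbf{1}_B(x)\,\mathbf{1}_{B^c}(y)$ forces $x \in B = B(0,r)$ and $y \in B^c$, so the singularity of $|x-y|^{-a}$ along $x=y$ is only felt near the circle $\{|z|=r\}$, where an inside point and an outside point can approach one another. I would therefore first record the elementary triangle-inequality bound: for $x \in B$ and $y \in B^c$,
\begin{align*}
|x-y| \geq |y| - |x| \geq r - |x| =: d_x\,,
\end{align*}
so that at fixed $x$ the inner integral in $y$ never sees the singularity at scales below $d_x$. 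Since the integrand is nonnegative, Tonelli's theorem lets me iterate the integrals freely, and I may interpret the domain as $x,y \in K$ with $K \subset \C$ bounded.

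Next I would estimate the inner integral at fixed $x \in B \cap K$. Writing $w = y-x$, the region $\{y \in B^c\} \cap K$ is contained in $\{\, d_x \leq |w| \leq R \,\}$ with $R = 2\,\mathrm{diam}(K)$, so in polar coordinates
\begin{align*}
\int_K \mathbf{1}_{B^c}(y)\, |x-y|^{-a}\, d^2 y \leq 2\pi \int_{d_x}^{R} \rho^{\,1-a}\, d\rho \leq C\, d_x^{\,2-a}
\end{align*}
for $2 < a < 3$, where $C$ depends only on $a$ and $\mathrm{diam}(K)$. (For $a \leq 2$ the same computation yields a uniform bound — with a logarithmic bound at $a=2$ — and then the claim is immediate since $K$ is bounded, so only the range $2<a<3$ is genuinely at issue.)

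The final step is to integrate this bound over $x \in B \cap K$. Using polar coordinates about the origin and $d_x = r - |x|$,
\begin{align*}
\int_K \mathbf{1}_{B}(x)\, d_x^{\,2-a}\, d^2 x \leq 2\pi \int_0^r (r-\rho)^{2-a}\, \rho\, d\rho\,,
\end{align*}
and the integrand $(r-\rho)^{2-a}$ is integrable near $\rho = r$ precisely when $2-a > -1$, i.e.\ $a<3$. This boundary-layer computation is the heart of the matter and is the only place the sharp threshold enters. The main (though modest) conceptual obstacle is simply recognizing that the inside/outside constraint upgrades the two-dimensional local integrability condition $a<2$ into the boundary condition $a<3$; once the estimate $|x-y| \geq r-|x|$ is in hand, the two one-dimensional radial integrals above deliver the result.
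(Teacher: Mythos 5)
Your proof is correct, and the threshold $a<3$ comes out exactly where it should: the lower bound $|x-y|\geq r-|x|$ for $x\in B$, $y\in B^c$ is valid, the inner polar-coordinate estimate $\int_{d_x}^{R}\rho^{1-a}\,d\rho\leq C\,d_x^{2-a}$ holds for $a>2$, and the outer integral $\int_0^r(r-\rho)^{2-a}\rho\,d\rho$ converges precisely for $a<3$; the cases $a\leq 2$ are indeed trivial by local integrability. Your route is genuinely different from the paper's. The paper first asserts that the integral over balls converges if and only if a model box integral $\int_0^1dx_1\int_0^1dx_2\int_0^1dy_1\int_1^2dy_2\,|x-y|^{-a}$ does (remarking that one could simply use squares instead of balls elsewhere in the argument), then switches to the $\ell^1$ norm and integrates out coordinates one at a time after a linear change of variables, arriving at an effectively one-dimensional singularity $|z_1|^{2-a}$ near the origin, integrable iff $a<3$. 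Both arguments isolate the same phenomenon --- the singularity is supported only transversally to the interface $\partial B$, so one integration direction is spent crossing it and the critical exponent rises from $2$ to $3$ --- but yours works directly with the round ball and avoids the reduction-to-a-box step, which the paper states without proof. The paper's version has the minor advantage of matching the rectangular geometry it alludes to as an alternative; yours is the more self-contained and rigorous of the two as written.
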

\begin{proof}
Denote $x=(x_1,x_2)$ and $y=(y_1,y_2)$. The integral is convergent if and only if the integral
\begin{align*}
\int_0^1 \, dx_1 \int_0^1 \, dx_2 \int_0^1 \, dy_1 \int_1^2 \, dy_2 \, \frac{1}{|x-y|^a}
\end{align*}
is convergent (or we could use squares instead of the balls in our smoothness proof). We can also use the norm $|x-y| = |x_1-y_1| + |x_2-y_2|$ since all norms are equivalent. After a change of variables $z = \varphi(x_1,y_1) = (x_1-y_1,x_1+y_1)$ we get
\begin{align*}
&= \int_{\varphi^{-1}([0,1]^2)} \, d^2z \, \int_0^1 \, d x_2\, \int_1^2 \, d y_2 \, \frac{1}{( C|z_1| + (y_2-x_2))^a} \\
&= \int_{\varphi^{-1}([0,1]^2)} \, d^2z \, \int_0^1 \, d x_2\, \left( \frac{1}{(C|z_1| + (1-x_2))^{a-1}} - \frac{1}{(C|z_1| + (2-x_2))^{a-1}}  \right) \\
&= \int_{\varphi^{-1}([0,1]^2)} \, d^2z \, \Bigg( \frac{1}{(C|z_1|)^{a-2}} - \frac{1}{(C|z_1|+2)^{a-2}} \Bigg)\,.
\end{align*}
The singular part of this integral is around the origin, and computing the part over a small square of the first term yields $a<3$.
\end{proof}

The fusion estimate (Proposition 5.1 from \cite{ward}) tells us that when $x$ and $y$ fuse and the $z_i$'s are away from $x$ and $y$, we have
\begin{align*}
G(x,y;\bz) & \leq C |x-y|^{-2+\zeta}\,,
\end{align*}
where $\zeta > 0$ depends on $\gamma$. For higher derivatives we need a generalization of this estimate for a fusion of $n$ separated pairs of points. 

Finally, the boundary integral term in (\ref{c1}) is convergent since the map $x \mapsto G(x;\bz)$ is continuous when $x$ stays away from the $z_i$'s. The conclusion is that the limit of the integrals on the left-hand side of (\ref{c1}) is expressible as a sum of absolutely convergent integrals. Concretely we have the following
\begin{align}\label{derivative_simplified}
\partial_{z_i} G(\bz) &= - \sum_{j=1;j \neq i}^N \frac{\alpha_j}{2} \frac{1}{z_i-z_j} G(\bz) + \frac{\mu \gamma^2}{2} \int F_1(x_1,x_2) G(x_1,x_2;\bz) \, d^2 x_1 \, d^2 x_2 \nonumber \\
& \quad + \int_\C f(x) G(x;\bz) \, d^2 x +  \oint_{\partial B_1} G(x_1;\bz) \, d x_1\,,
\end{align}
where
\begin{align*}
f(x) &= - \sum_{j=1;j \neq i}^\infty \left( \frac{\alpha_j \gamma }{2} \frac{\mu}{s+1} \frac{\mathbf{1}_{B(z_i,r)}(x)}{x-z_j} \right) + \frac{\alpha_i \gamma}{2} \frac{\mu}{s+1} \frac{\mathbf{1}_{B(z_i,r)^c}(x)}{x-z_i}\,,
\end{align*}
and $s = \frac{\sum_{j=1}^N \alpha_j-2Q}{\gamma}$. Iteration of a process like this is our strategy for the smoothness proof: we will differentiate the resulting absolutely integrable terms in \eqref{derivative_simplified} and the derivatives are given by integrals that do not converge absolutely. Then we simplify these non-absolutely convergent integrals into sums of absolutely convergent integrals and repeat. To properly deal with the non-absolutely convergent integrals one has to replace $G$ with the regularized version $G_\eps$ so that everything is convergent, and then study the $\eps \to 0$ limit.

\section{$n$-pair fusion estimate}\label{section_fusion}

In this section we prove a result concerning the singular behaviour of $G(\bz)$ when multiple pairs of the points $z_i$ merge.

Let $\bz \in U_N = \{ (z_1,\hdots,z_N) \in \C^N : z_i \neq z_j, \forall \, i \neq j\}$ for some $N \geq 3$ and let $n \in \N$. We fix a number $i(j) \in \{1,\hdots,N\}$ for each $j \in \{1,\hdots,n\}$. Define $\delta = \min_{j,k=1;j \neq k}^N\{|z_j-z_k|\}$ and $B_j = B(z_{i(j)},r/j)$ with $r < \delta/2$. Let $\{A_j\}_{j=1}^n$ be disjoint closed annuli containing the circles $\{\partial B_j\}_{j=1}^n$.

\begin{figure}[H]
\begin{tikzpicture}[scale=1.3]

\draw[dashed,thick] (4.5,2.5) circle(55pt);
\draw[thick] (4.5,2.5) circle(50pt);
\draw[dashed,ultra thick] (4.5,2.5) circle(45pt);
\fill[black, fill opacity=0.20] (4.5,2.5) circle(55pt);
\fill[white] (4.5,2.5) circle(45pt);
\draw[thick] (4.5,2.5) circle(25pt);
\draw[dashed,ultra thick] (4.5,2.5) circle(20pt);
\draw[dashed,thick] (4.5,2.5) circle(30pt);
\fill[black, fill opacity=0.2] (4.5,2.5) circle(30pt);
\fill[white] (4.5,2.5) circle(20pt);
\fill[black] (4.5,2.5) circle(1pt);

\draw[dashed,thick] (0,0) circle(41pt);
\draw[dashed,ultra thick] (0,0) circle(29pt);
\fill[black, fill opacity=0.20] (0,0) circle(41pt);
\fill[white] (0,0) circle(29pt);
\draw[thick] (0,0) circle(35pt);
\fill[black] (0,0) circle(1pt);

\fill[black] (1.35,4) circle(1pt);

\fill[black] (1,3) circle(1pt);

\draw (4.65,2.35) node{$z_1$};
\draw (0.15,-0.15) node{$z_2$};
\draw(1.5,3.85) node{$z_3$};
\draw(1.15,2.85) node{$z_4$};

\draw (4.5,1.1) node{$B_1$};
\draw (0, -0.75) node{$B_2$};
\draw (4.5,2.05) node{$B_3$};

\draw (4.5,4.6) node{$A_1$};
\draw (0,1.6) node{$A_2$};
\draw (4.5,3.7) node{$A_3$};
\end{tikzpicture}

\caption{Explanation of the notation in the case $N=4$ and $n=3$ with the choice $(i(1),i(2),i(3)) = (1,2,1)$. The thick lines are the boundaries of the balls $B_i$ and the dashed lines are the boundaries of the annuli $A_i$.}

\label{figure}

\end{figure}
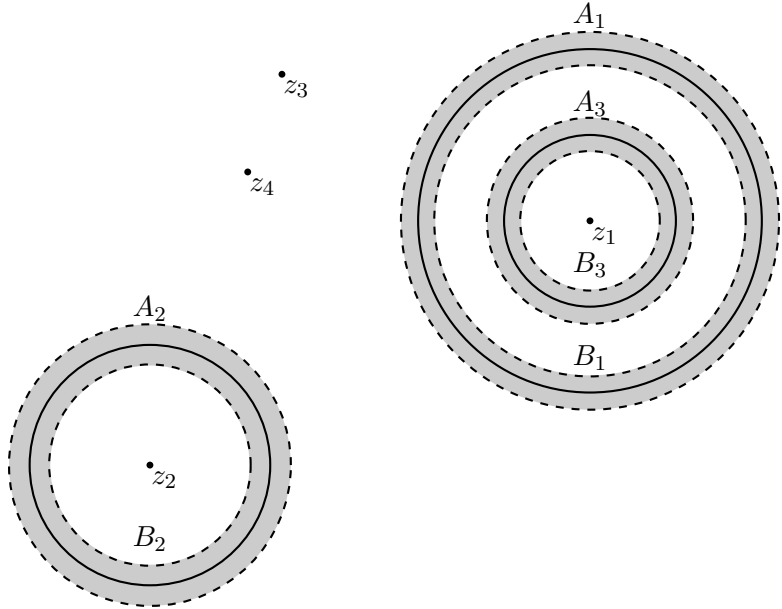

\begin{lemma}\label{fusion}
Denote $\bx = (x_1,\hdots,x_n)$, $\by = (y_1,\hdots,y_n)$ and $\bz = (z_1,\hdots,z_N)$. Let $i(j) \in \{1,\hdots, N\}$ for each $j \in \{1,\hdots, n\}$. Then
\begin{align*}
G(\bx,\by;\bz) \prod_{j=1}^n \ba{i(j)}{j}(x_j) \bac{i(j)}{j}(y_j) & \leq C_{\delta} \prod_{j=1}^n |x_j-y_j|^{-2+\zeta}\,,
\end{align*}
where $\zeta = \zeta(\gamma) > 0$ and $C_\delta$ is a constant depending on $\delta$.
\end{lemma}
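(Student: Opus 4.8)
The plan is to feed the enlarged correlation $G(\bx,\by;\bz)$ — which now carries $2n$ additional insertions of weight $\gamma$ at the points $x_j,y_j$ — into the explicit GMC representation (\ref{reduction_to_chaos}), whose Seiberg parameter becomes $s' = s + 2n$ with $s = \frac{\sum_i \alpha_i - 2Q}{\gamma} > 0$. First I would isolate the Coulomb-gas prefactor $\prod |w-w'|^{-\beta\beta'}$ over all $\binom{2n+N}{2}$ pairs of charges. Because the annuli $A_j$ are pairwise disjoint and each surrounds a circle $\partial B(z_{i(j)},r/j)$ of positive radius, on the support of the indicators every pair other than $\{x_j,y_j\}$ is separated by a distance bounded below by a positive constant $c(\delta,r,n)$; all those factors are therefore bounded by a single constant $C_\delta$. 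The only singular factors are $\prod_{j=1}^n |x_j-y_j|^{-\gamma^2}$, coming from the like-charged pairs. Thus the lemma reduces to the negative-moment estimate
\[
\E\Big[\Big(\textstyle\int_\C \mathcal{F}\,M_\gamma\Big)^{-s'}\Big] \leq C_\delta \prod_{j=1}^n |x_j-y_j|^{\gamma^2 - 2 + \zeta}.
\]

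To bound a negative moment from above I must bound $I := \int_\C \mathcal{F}\,M_\gamma$ from below. I would choose disjoint neighborhoods $D_j \subset A_j$ of the fusing pairs and discard the rest: since $t \mapsto t^{-s'}$ is convex and decreasing, $I^{-s'} \leq \big(\sum_j I_j\big)^{-s'}$ with $I_j = \int_{D_j}\mathcal{F}\,M_\gamma$. The decisive step is to decorrelate the $D_j$. Writing the Green function as $C_g(u,v) = \ln\frac{1}{|u-v|} + (\text{bounded})$ and noting that across distinct annuli $\ln\frac{1}{|u-v|}$ is bounded, there is a finite $M_0$ with $C_g(u,v) \leq M_0$ whenever $u,v$ lie in different $D_j$. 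Let $\tilde X_j$ be independent Gaussian fields with covariance $C_g$ restricted to $D_j\times D_j$, let $\mathcal{N}\sim\mathcal{N}(0,M_0)$ be an independent global constant, and set $V|_{D_j} = \tilde X_j + \mathcal{N}$. Then $\mathrm{Cov}(V) \geq C_g$ on $\bigcup_j D_j$ — within a block by the added $M_0$, and across blocks because $\mathrm{Cov}(V) = M_0 \geq C_g$ — so Proposition \ref{kahane}, applied to the mollifications and passed to the $\eps\to0$ limit, gives $\E[I^{-s'}] \leq \E\big[(\sum_j I_j^V)^{-s'}\big]$. The constant $\mathcal{N}$ factors out of each $I_j^V$, contributing only a finite Gaussian moment, and the residual integrals $J_j := \int_{D_j}\mathcal{F}\,M_\gamma^{\tilde X_j}$ are independent.

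At this point AM-GM converts the sum into a product, $\big(\sum_j J_j\big)^{-s'} \leq n^{-s'}\prod_j J_j^{-s'/n}$, so independence yields
\[
\E[I^{-s'}] \leq C \prod_{j=1}^n \E\big[J_j^{-s'/n}\big].
\]
On $D_j$ the weight $\mathcal{F}$ is comparable to $|u-x_j|^{-\gamma^2}|u-y_j|^{-\gamma^2}$ (the other charges being at a fixed distance), so each $J_j$ is a genuine single-pair fusion integral, and the argument behind the two-point estimate Proposition 5.1 of \cite{ward}, re-run for the generic negative-moment order $q = s'/n$, gives $\E[J_j^{-q}] \leq C_q |x_j-y_j|^{\gamma^2 - 2 + \zeta}$ with $\zeta = \zeta(\gamma,q) > 0$. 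Taking the product and reinstating the prefactor $C_\delta\prod_j|x_j-y_j|^{-\gamma^2}$ produces the claimed bound.

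The main obstacle is exactly this decorrelation: the log-correlated field does not factorize over the annuli, so one cannot simply multiply $n$ copies of the single-pair estimate, and a naive Hölder bound would split each exponent by $1/n$ and destroy the result. The Kahane comparison above is what saves it, and the two delicate points are (i) keeping the convexity direction straight — because $t^{-s'}$ is convex and \emph{decreasing}, bounding the negative moment from above forces us to dominate the true covariance from \emph{above}, which is why the global constant $\mathcal{N}$ must be added rather than subtracted; and (ii) verifying that the single-pair GMC computation of \cite{ward} yields a strictly positive gain $\zeta$ uniformly for the moment order $q = s'/n$, and not merely for the specific value arising in the original two-point estimate.
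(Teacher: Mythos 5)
Your proof is correct and follows essentially the same route as the paper: reduce to a negative GMC moment over disjoint annular neighborhoods $D_j$ of the fusing pairs, decorrelate them via Kahane convexity using independent localized copies of the field plus a global Gaussian constant, then apply AM--GM, independence, and the single-pair estimate of \cite{ward} at moment order $q/n$. The only cosmetic differences are that the paper treats the independent case first and then reduces to it, and that the uniformity of the gain you flag as delicate is automatic there, since the exponent of $|x_j-y_j|$ produced by the single-pair computation is $\frac{(2\gamma-Q)^2}{2}-\gamma^2 = -2+\zeta$ with $\zeta=\frac{\gamma^2}{8}+\frac{2}{\gamma^2}-1>0$, independent of the moment order, which only enters the convergence of the series over the events $M_{j,k}$.
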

\begin{proof}
Our proof will be based on the $n=1$ proof which was done in \cite{ward}. From the proof of Proposition 5.1 in \cite{ward} we get (see Appendix \ref{fusion_appendix} for details)
\begin{align}\label{I_estimate}
G_\eps(\bx,\by;\bz) \prod_{j=1}^n \ba{i(j)}{j}(x_j) \bac{i(j)}{j}(y_j) & \leq C(\delta, \alpha, g) \prod_{j=1}^n |x_j-y_j|^{-\gamma^2} I\,, 
\end{align}
where 
\begin{align*}
I &= \E \left[ \left( \sum_{j=1}^n \int_{D_j}  \frac{1}{ |x-x_j|^{ \gamma^2} |x-y_j|^{\gamma^2}} M_{\gamma}(d^2 x) \right)^{-q} \right]\,,
\end{align*}
and $D_j$ is an annulus around $x_j$ with radii $R_j$ and $|x_j-y_j|$ (we can assume these radii to be so small that the distance between $D_j$ and $\partial A_j$ is positive). The moment $q$ is given by
\begin{align*}
q &= \frac{\sum_i \alpha_i + 2n \gamma - 2Q}{\gamma} = 2n + \frac{\sum_i \alpha_i - 2Q}{\gamma}\,,
\end{align*}
where $\alpha_i$ are assumed to satisfy the Seiberg bounds (\ref{seiberg}). We define the notation
\begin{align*}
W_j &= \int_{D_j}\frac{1}{ |x-x_j|^{ \gamma^2} |x-y_j|^{\gamma^2}} M_{\gamma}(d^2 x)\,, \quad j \in \{1,\hdots,n\}\,.
\end{align*}

\begin{enumerate}
\item \textbf{First assume that the random variables $(W_j)_{j=1}^n$ are independent.} Then we estimate (using first the estimate $(\sum_{j=1}^n W_j)^q \geq \prod_{j=1}^n W_j^{q/n} $ and then independence and finally the proof of Proposition 5.1 in \cite{ward}, summarized in Appendix \ref{fusion_appendix})
\begin{align}\label{ind_estimate}
I & \leq \prod_{j=1}^n \E \frac{1}{W_j^{q/n}} \\
&\leq C \prod_{j=1}^n \sum_{k=0}^\infty (k+1) e^{((2 \gamma -Q) - \frac{q}{n} \gamma)k} \frac{1}{| \ln |x_j-y_j| |^{3/2}} |x_j-y_j|^{ \frac{(2 \gamma - Q)^2}{2}}\,. \nonumber
\end{align}
The series $\sum_k$ is convergent since 
\begin{align*}
2 \gamma - Q - \frac{q}{n} \gamma &= 2 \gamma - Q - 2 \gamma - \frac{\sum_i \alpha_i - 2Q}{n} = - Q - \frac{\sum_i \alpha_i - 2Q}{n} < 0\,.
\end{align*}
The inequality follows from the Seiberg bound (\ref{seiberg}). The claimed estimate follows by simplifying the exponent of $|x_j-y_j|$.

\item \textbf{In reality the random variables $(W_j)_{j=1}^n$ are not independent.} However, we can reduce everything to the independent case by using Proposition \ref{kahane}. Let $X$ be the GFF and $\tilde X = \sum_{j=1}^n \tilde X_j$ be a Gaussian field where $\tilde X_j$ is supported in $D_j$ and has the covariance \eqref{round_covariance}, and all the terms $\tilde X_j$ are independent.  For $x \in D_1$, $y \in D_2$ we have 
\begin{align*}
\E X(x)X(y) &= C_g(x,y) \leq \sup_{x \in D_1, y \in D_2} |C_g(x,y)| =: c_\delta\,.
\end{align*}
Notice also that $\E \tilde X(x) \tilde X(y) = 0$ by definition of $\tilde X$. For $x,y \in D_1$ we have of course $\E X(x) X(y) = \E \tilde X(x) \tilde X(y)$ and thus the inequality
\begin{align*}
\E X(x) X(y) \leq c_\delta + \E \tilde X(x) \tilde X(y)
\end{align*}
holds for all $x,y \in D_1 \cup D_2$. Let $N$ be an independent centered Gaussian with variance $c_{\delta}$. Then 
\begin{align}
\E[X(x)X(y)] &\leq \E [(\tilde X(x)+N)(\tilde X(y)+N)]\,.
\end{align}
Thus by Proposition \ref{kahane}
\begin{align}
\E \left[ \left( \int f(x) M_\gamma(d^2x)  \right)^{-q} \right] & \leq \E \left[ \left( \int f(x) \tilde M_\gamma(d^2x) \right)^{-q} \right] \E [e^{- q \gamma N}] \,,
\end{align}
where 
\begin{align*}
f(x) &= \sum_{j=1}^n \mathbf{1}_{D_j}(x) \frac{1}{|x-x_j|^{\gamma^2} |x-y_j|^{\gamma^2}}\,,
\end{align*}
and $M_\gamma$ is the chaos measure of $X$ and $\tilde M_\gamma$ is the chaos measure of $\tilde X$. Since $N$ is Gaussian, the exponential moment exists and thus the factor $ \E [e^{- q \gamma N}]$ is just a finite constant.  Now we can estimate as in the independent case (\ref{ind_estimate}).
\end{enumerate}
\end{proof}

\section{Sketch of the proof}

In this section we sketch the smoothness proof. The actual detailed proof is given in the next section. Fix some insertion $z_i \in \{z_1,\hdots,z_N\}$ (the index $i$ will now be fixed for the rest of this section) and define $B_j = B(z_i,r/j)$. Note that since $i$ is fixed, all the balls $B_j$ have the same center. By $A_j$ we denote a closed annulus containing $\partial B_j$ such that all the $A_j$'s are disjoint. We define the functions
\begin{align}\label{sketch_F}
F_j(x,y) = \frac{\ball{i}{j}(x) \ballc{i}{j}(y)}{x-y}\,.
\end{align}
Recall the derivative formula
\begin{align*}
\partial_{z_i} G(\bz) &= - \sum_{j=1;j \neq i}^N \frac{\alpha_j}{2} \frac{1}{z_i-z_j} G(\bz) + \frac{\mu \gamma^2}{2} \int F_1(x_1,x_2) G(x_1,x_2;\bz) \, d^2 x_1 \, d^2 x_2  \\
& \quad + \int_\C f(x) G(x;\bz) \, d^2 x +  \oint_{\partial B_1} G(x_1;\bz) \, d x_1\,,
\end{align*}
where
\begin{align*}
f(x) &= - \sum_{j=1;j \neq i}^\infty \frac{\alpha_j \gamma }{2} \frac{\mu}{s+1} \frac{\mathbf{1}_{B(z_i,r)}(x)}{x-z_j} + \frac{\alpha_i \gamma}{2} \frac{\mu}{s+1} \frac{\mathbf{1}_{B(z_i,r)^c}(x)}{x-z_i}\,,
\end{align*}
and $s = \frac{\sum_{j=1}^N \alpha_j-2Q}{\gamma}$.
\subsection{Second derivative}
Next we compute the $i$th partial derivative of the second term in the above formula for $\partial_{z_i} G(\bz)$, which is the most problematic term. By the derivative formula \eqref{derivative} we have
\begin{align*}
\int F_1(x_1,x_2) \partial_{z_i} G(x_1,x_2;\bz) \, d^2 x_1 \, d^2 x_2 &= -\int F_1(x_1,x_2) G(x_1,x_2;\bz) \sum_{j=1;j \neq i}^N \frac{\alpha_i \alpha_j}{2} \frac{1}{z_i-z_j} \, d^2 x_1 \, d^2 x_2 \\
& \quad - \int F_1(x_1,x_2) G(x_1,x_2;\bz) \sum_{j=1}^2 \frac{\alpha_i \gamma}{2} \frac{1}{z_i-x_j} \, d^2 x_1 \, d^2 x_2 \\
& \quad+ \frac{\mu \gamma^2}{2} \int F_1(x_1,x_2) \frac{G(x_1,x_2,x_3;\bz)}{z_i-x_3} \, d^2 x_1 \, d^2 x_2 \, d^2 x_3\,. 
\end{align*}
We have to simplify the terms with $\frac{1}{z_i-x_k}$, $k=1,2,3$, since they are not absolutely convergent. In all the cases the simplification will work in essentially the same way so we focus on the case $k=3$. The simplification follows from an analogue of the identity (\ref{fundamental_identity}):
\begin{align}\label{sketch_f0}
&\int \ball{i}{2}(x_3) F_1(x_1,x_2) \partial_{x_3} G(x_1,x_2,x_3;\bz) \prod_{j=1}^3 d^2 x_j \nonumber \\
& \quad =  -\int \ball{i}{2}(x_3) F_1(x_1,x_2) G(x_1,x_2,x_3;\bz) \sum_{j=1}^N \frac{\alpha_j \gamma}{2} \frac{1}{x_3-z_j}  \prod_{j=1}^3 d^2 x_j \nonumber \\
&\quad  - \int \ball{i}{2}(x_3) F_1(x_1,x_2) G(x_1,x_2,x_3;\bz) \sum_{j=1}^2 \frac{\gamma^2}{2} \frac{1}{x_3-x_j} \prod_{j=1}^3 d^2 x_j \nonumber \\
& \quad + \frac{\mu \gamma^2}{2}  \int \ball{i}{2}(x_3) F_1(x_1,x_2) \frac{G(x_1,x_2,x_3,x_4;\bz)}{x_3-x_4} \prod_{j=1}^4 d^2 x_j\,.
\end{align}
We want to solve for the $j=i$ term (recall that we fixed the index $i$ at the beginning of the section) in the sum $\sum_{j=1}^N$. The other terms in the first sum are automatically convergent. In the last term we insert $1 = \ball{i}{2}(x_4) + \ballc{i}{2}(x_4)$ and it becomes 
\begin{align*}
\frac{\mu \gamma^2}{2}  \int F_2(x_3,x_4) F_1(x_1,x_2) \frac{G(x_1,x_2,x_3,x_4;\bz)}{x_3-x_4} \prod_{j=1}^4 d^2 x_j\,,
\end{align*}
since the integral over $\ball{i}{2}(x_4)$ vanishes by antisymmetry.

Next we deal with the $\frac{1}{x_3-x_j}$, $j=1,2$, terms on the third line of \eqref{sketch_f0}. We proceed similarly as above, that is, we insert $1 = \ball{i}{2}(x_1) + \ballc{i}{2}(x_1)$. We symmetrize the integral over the first factor and get
\begin{align}\label{sketch_f1}
&\frac{1}{2} \int \ball{i}{2}(x_3)\ball{i}{2}(x_1)   \frac{G(x_1,x_2,x_3;\bz)}{x_3-x_1} (F_1(x_1,x_2) - F_1(x_3,x_2)) \prod_{j=1}^3 d^2 x_j\,.
\end{align}
Next we want to argue that when $x_1,x_3 \in B_2$, the factor $F_1(x_1,x_2) - F_1(x_3,x_2)$ behaves like $(x_3-x_1)$ multiplied by something that is integrable against $G$. Indeed, by using $\mathbf{1}_{B_2}(x_j) \mathbf{1}_{B_1}(x_j) = \mathbf{1}_{B_2}(x_j)$ we get
\begin{align*}
&\ball{i}{2}(x_3)\ball{i}{2}(x_1)(F_1(x_1,x_2) - F_1(x_3,x_2)) \\
 &= \ball{i}{2}(x_3)\ball{i}{2}(x_1) \left( \frac{\ball{i}{1}(x_1) \ballc{i}{1}(x_2)}{x_1-x_2} - \frac{\ball{i}{1}(x_3)  \ballc{i}{1}(x_2)}{x_3-x_2} \right) \\
&= \frac{\ball{i}{2}(x_1) \ballc{i}{1}(x_2)}{x_1-x_2} \ball{i}{2}(x_3) - \frac{\ball{i}{2}(x_3) \ballc{i}{1}(x_2)}{x_3-x_2} \ball{i}{2}(x_1)\,,
\end{align*}
and both of these factors are smooth (and bounded) when $x_1,x_3 \in B_2$. This fact is quite intuitive since the singular behaviour of the function $F_1(x,y)$ happens on the circle $\partial B_1$ and thus the singularity is gone when $x$ is restricted to the smaller ball $B_2$. From this we infer that
\begin{align*}
\ball{i}{2}(x_3)\ball{i}{2}(x_1)(F_1(x_1,x_2) - F_1(x_3,x_2)) &= \mathcal{O}(x_3-x_1) H(x_2)\,,
\end{align*}
where $H$ is bounded. When we insert this to (\ref{sketch_f1}) we get a nice integrable term. The $\ballc{i}{2}(x_1)$-term is
\begin{align*}
\int \ball{i}{2}(x_3)\ballc{i}{2}(x_1)   \frac{G(x_1,x_2,x_3;\bz)}{x_3-x_1} F_1(x_1,x_2) \prod_{j=1}^3 d^2 x_j &  = \int F_2(x_3,x_1) F_1 (x_1,x_2) G(x_1,x_2,x_3;\bz) \prod_{j=1}^2 d^2 x_j\,.
\end{align*}
Terms like this are shown to be integrable by using the estimate
\begin{align}\label{sketch_f2}
|F_j(x,y)| & \leq |F_j(x,y)| \ba{i}{j}(x)\bac{i}{j}(y) +   C\,,
\end{align}
because after inserting this, we get terms where one variable has singularity only on one of the circles $\partial B_j$ (by disjointness of the annuli $A_j$) and since the radius $j$ is different for each $F$-factor, none of these singularities "stack". This is proven in Proposition \ref{integrability}.

Next we integrate by parts the left-hand side of (\ref{sketch_f0}). We get

\begin{align*}
\int \ball{i}{2}(x_3) F_1(x_1,x_2) \partial_{x_3} G(x_1,x_2,x_3;\bz) \prod_{j=1}^3 d^2 x_j = - \oint_{\partial B_2} d x_3  \int F_1(x_1,x_2) G(x_1,x_2,x_3;\bz) \prod_{j=1}^2 d^2 x_j\,.
\end{align*}
To show that this is convergent we again split the integral into the part $\ba{i}{1}(x_1) \bac{i}{1}(x_2)$ and its complement. In the complement $F$ is bounded so the integral is clearly absolutely integrable. For the other part our fusion estimate \ref{fusion} implies
\begin{align*}
F_1(x_1,x_2) G(x_1,x_2,x_3;\bz) \ba{i}{1}(x_1) \bac{i}{1}(x_2) \ball{i}{2}(x_3)  \leq C \frac{\ba{i}{1}(x_1) \bac{i}{1}(x_2)}{|x_1-x_2|^{3-\zeta}}\,,
\end{align*}
where $C$ depends on $\delta = \min_{i,j=1; i \neq j}^N |z_i-z_j|$ and $\zeta > 0$. This is integrable.

\subsection{Higher derivatives}

Denote $\bx = (x_1,\hdots,x_{2n})$. When we start to compute higher order derivatives, integrals of the form
\begin{align*}
\int F(\bx) G(\bx;\bz) \prod_{j=1}^n d^2 x_j\,,
\end{align*}
where $F$ is a product of functions of the form \eqref{sketch_F}, will start to appear. For each new derivative $\partial_{z_i}$ we add new $\gamma$-insertions (insertions with Liouville momentum $\alpha = \gamma$) to $G$ and a factor $F_j(x_a,x_b)$, with some indices $a \neq b$, to $F$, where the index $j$ tells the radius of the ball appearing in the definition of $F_j$ (we increment $j$ after each differentiation so all the balls have different radii). The most singular term appearing in the formula for $\partial_{z_i}^n G(\bz)$ will then be 
\begin{align*}
\int \prod_{j=1}^n F_j (x_j,y_j) G(\bx,\by;\bz)  d^2 x_j \, d^2 y_j\,,
\end{align*}
where $\bx = (x_1,\hdots,x_n)$ and $\by = (y_1,\hdots,y_n)$. The proof of convergence of this integral is essentially the same as in the $C^2$-case we did above: just use the estimate (\ref{sketch_f2}) and the $n$-pair fusion estimate \ref{fusion}. Then to show differentiability of this integral in the $z_i$'s, we have to take the derivative of $G$ in the above integral and this leads us to investigate the integrals
\begin{align}\label{sketch_n}
\int \prod_{j=1}^n F_j (x_j,y_j) \frac{G(\bx,x_{n+1},\by;\bz)}{x_k-z_i} d^2 x_j \, d^2 y_j \, d^2 x_{n+1}\,,
\end{align}
where $k \in \{1,\hdots,n+1\}$. This we simplify by using the same integration by parts argument as in the $C^2$-case. Thus we write the integral
\begin{align*}
\int \ball{}{n+1}(x_k) \prod_{j=1}^n F_j(x_j,y_j) \partial_{x_k} G(\bx,x_{n+1},\by;\bz) d^2 x_j \, d^2 y_j \, d^2 x_{n+1}
\end{align*}
in two different ways (by using the derivative formula $\partial_{x_k} G$ and by integration by parts). Then we can solve for (\ref{sketch_n}). In the resulting expression we have to simplify the integrals
\begin{align*}
\int \ball{i}{n+1}(x_k) \prod_{j=1}^n F_j (x_j,y_j) \frac{G(\bx,x_{n+1},\by;\bz)}{x_k-x_l} d^2 x_j \, d^2 y_j \, d^2 x_{n+1}\,,
\end{align*}
where $l \neq k$. Again, we insert $1 = \ball{i}{n+1}(x_l) + \ballc{i}{n+1}(x_l)$. The latter part produces the integral
\begin{align*}
\int F_{n+1}(x_k,x_l) \prod_{j=1}^n F_j (x_j,y_j) G(\bx;x_{n+1},\by;\bz)  d^2 x_j \, d^2 y_j \, d^2 x_{n+1}\,,
\end{align*}
which converges by the fusion estimate. The remaining part we symmetrize (as before). Note that the factors $F_j$ that depend on $x_k$ are smooth and bounded in $B_{n+1}$ and the same holds for $x_l$. Thus after symmetrizing, the parts which depend on these variables produce a $\mathcal{O}(x_k-x_l)$-term and the parts that do not depend on these variables remain the same. Thus we get
\begin{align*}
&\int \ball{i}{n+1}(x_k) \ball{i}{n+1}(x_l) \prod_{j=1}^n F_j (x_j,y_j) \frac{G(\bx,x_{n+1},\by;\bz)}{x_k-x_l} d^2 x_j \, d^2 y_j \, d^2 x_{n+1} \\
&= \frac{1}{2} \int \ball{i}{n+1}(x_k) \ball{i}{n+1}(x_l) \underset{j \notin \{k,l\}}{\prod_{j=1}^n} F_j (x_j,y_j) \varphi(\bx,x_{n+1},\by ) G(\bx,x_{n+1},\by;\bz)  d^2 x_j  d^2 y_j  d^2 x_{n+1}
\end{align*}
where $\varphi$ is bounded. Integrability of this follows from the fusion estimate.

\section{Proof of smoothness}\label{section_proof}

In this section we give the detailed proof of Theorem \ref{theorem}. We want to iterate the computation we did in the $C^1$ proof. What we will observe later is that the derivatives of $G(\bz)$ can be expressed as integrals of singular functions against $G$ with additional $\gamma$-insertions.

Let $N \geq 3$ denote the amount of insertion points in the correlation function and $n \in \Z_+$ the amount of partial derivatives we are taking. Fix a sequence of numbers $i(j) \in \{1,\hdots N\}$ where $j \in \{1,\hdots,n\}$. We study the $n$th order partial derivative
\begin{align*}
\partial_{z_{i(1)}} \hdots \partial_{z_{i(n)}} G(\bz)\,,
\end{align*}
where $\bz = (z_1,\hdots,z_N) \in U_N$. Let $B_j = B(z_{i(j)},r/j)$, $A_j$ be the corresponding closed annulus containing $\partial B_j$ so small that $(A_j)_{j=1}^n$ are disjoint (see Figure \ref{figure}, which describes the notation in the case of the partial derivative $\partial_{z_1} \partial_{z_2} \partial_{z_1} G(z_1,z_2,z_3,z_4)$), and 
\begin{align}
F_j(x,y) = \frac{\ball{}{j}(x) \ballc{}{j}(y)}{x-y}\,.
\end{align}
To set up a suitable induction, we define the following function classes.
\begin{definition}\label{class}
By $\sF_n$ we denote the set of functions which are linear combinations of functions of the form
\begin{align*}
\varphi(\bx) \prod_{j \in J} F^{\ij}_j (x_\kj,  x_{ \tkj})
\end{align*}
where $\bx = (x_1,\hdots,x_{2n})$, $J \subset \{1,\hdots,n\}$ and
\begin{enumerate}
\item $1 \leq \kj \leq 2n$ for all $j \in J$
\item $ \tkj \in \{1,\hdots,2n\} \setminus \{\kj\}$
\item $\varphi$ is bounded on $\C^{2n}$
\item $x_k \mapsto \varphi(\bx)$ is $C^\infty$ outside of the circles $\partial B_j$, $j \in \{1,\hdots,n\}$.
\end{enumerate}
\end{definition}
We will need the following two properties of this function class.
\begin{lemma}\label{smooth_lemma}
Let $F \in \sF_n$. Then $x_k \mapsto F(\bx) \ball{}{n+1}(x_k)$ is $C^\infty$ in $B_{n+1}$.
\end{lemma}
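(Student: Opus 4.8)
My plan is to fix every variable except $x_k$ and prove that the resulting one-variable map is $C^\infty$ on $B_{n+1}$; note that on $B_{n+1}$ the prefactor $\ball{}{n+1}(x_k)$ equals $1$, so the statement is simply that $x_k \mapsto F(\bx)$ is smooth there. Since a finite product of $C^\infty$ functions is $C^\infty$ and $\sF_n$ consists of finite linear combinations of products $\varphi(\bx)\prod_{j \in J} F_j(x_{a(j)}, x_{b(j)})$, it suffices to check each factor $\varphi$ and each $F_j$ separately and then recombine.

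First I would record a geometric separation property of $B_{n+1}$. Since $r < \delta/2$ and the radii $r/j$ are strictly decreasing in $j$, for each $j \leq n$ one of two situations occurs: if $i(n+1)=i(j)$ then $B_{n+1}$ and $B_j$ are concentric with $\overline{B_{n+1}} \subset B_j$, while if $i(n+1) \neq i(j)$ then $\overline{B_{n+1}}$ and $\overline{B_j}$ are disjoint and separated by a positive distance. In either case $\overline{B_{n+1}}$ is disjoint from the circle $\partial B_j$.

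The factor $\varphi$ is then immediate: by item (4) of Definition \ref{class}, $x_k \mapsto \varphi(\bx)$ is $C^\infty$ off the circles $\partial B_j$, $j \leq n$, and $B_{n+1}$ avoids all of them. For a factor $F_j$ I would split on whether $k \in \{a(j), b(j)\}$. If $k$ is neither index, $F_j$ does not depend on $x_k$ and is constant. If $k=a(j)$ or $k=b(j)$, the separation property shows that on $B_{n+1}$ both indicators occurring in $F_j$ are constant in $x_k$ (each equal to $0$ or $1$). Hence $F_j$ is either identically zero on $B_{n+1}$, and so smooth, or it equals a nonzero constant times $1/(x_k - x_{b(j)})$ when $k=a(j)$, respectively $1/(x_{a(j)} - x_k)$ when $k=b(j)$. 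In the surviving case the complementary indicator on the partner point equals $1$, which places that point in $(B_j)^c$ (when $k=a(j)$) or in $B_j$ (when $k=b(j)$); by the separation property this partner point lies at positive distance from $\overline{B_{n+1}}$, so the pole of $1/(x_k - \cdot)$ never enters $B_{n+1}$ and $F_j$ is real-analytic, hence $C^\infty$, there.

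The step I expect to carry the real content is this last one: one must verify that the indicators in $F_j$ do not merely keep the factor bounded but actually eject its pole from $B_{n+1}$. This is the rigorous form of the heuristic that the singularity of $F_j$ lives on $\partial B_j$ and disappears once $x_k$ is confined to the smaller ball $B_{n+1}$. Granting it, the product over $j \in J$ together with $\varphi$ is $C^\infty$ on $B_{n+1}$, and taking linear combinations finishes the proof.
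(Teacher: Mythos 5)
Your argument is correct and follows the same route as the paper, whose proof of this lemma is a one-line appeal to the definition of $F_j$ and property (4) of Definition \ref{class}; you have simply made explicit the geometric fact doing the work, namely that $\overline{B_{n+1}}$ is either contained in $B_j$ (concentric case) or disjoint from $\overline{B_j}$, so each indicator is constant on $B_{n+1}$ and any surviving pole of $1/(x_k-\cdot)$ lies outside $\overline{B_{n+1}}$. No gap.
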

\begin{proof}
Follows from the definition of $F_j$ and property 4 in Definition \ref{class}.
\end{proof}

\subsection{Simplification}
\begin{remark}
Strictly speaking we should do the following computations with the regularized functions $G_\eps$ and $\frac{1}{(x_a-x_b)_{\eps,\eps}}$, but this does not affect any of the algebraic manipulations we do, and interchanging limits and integrals works in the end easily by our integrability result (Proposition \ref{integrability}) and the estimates for $\sup_{\eps > 0} G_\eps(x,y;\bz)$ in \cite{ward}. Thus we choose not to write all the epsilons in our computations.
\end{remark}

\begin{lemma}\label{symmetrization}
Let $F \in \sF_n$, $a \in \{1,\hdots,2n+2 \}$ and $b \in \{1,\hdots,2n+2\} \setminus \{a\}$. Then
\begin{align*}
\int \ball{}{n+1}(x_a) F(\bx) \frac{G(\bx,x_{2n+1},x_{2n+2};\bz)}{x_a-x_b}  \prod_{j=1}^{2n+2}d^2 x_j &= \int \tilde F(\bx,x_{2n+1},x_{2n+2}) G(\bx,x_{2n+1},x_{2n+2};\bz) \prod_{j=1}^{2n+2} d^2 x_j \,,
\end{align*}
where $\tilde F \in \sF_{n+1}$.
\end{lemma}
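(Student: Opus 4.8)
The lemma takes a function $F \in \sF_n$ and a singular factor $\frac{1}{x_a - x_b}$ restricted to $\ball{}{n+1}(x_a)$, and rewrites the integral against $G$ so that the new integrand lies in $\sF_{n+1}$ — i.e., the non-absolutely-convergent $\frac{1}{x_a-x_b}$ singularity is replaced either by a legitimate $F_{n+1}$-factor or by something bounded. This is exactly the symmetrization step the author described in the sketch (equations \eqref{sketch_f1}–\eqref{sketch_f2} and the higher-derivative discussion), now phrased as a closed operation on the function classes $\sF_n$.

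**Proof plan:**

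The plan is to split $\ball{}{n+1}(x_a)$ according to whether $x_b$ lies in $B_{n+1}$ or not, by inserting the partition of unity $1 = \ball{}{n+1}(x_b) + \ballc{}{n+1}(x_b)$. First I would handle the complementary part $\ball{}{n+1}(x_a)\ballc{}{n+1}(x_b)$: here $\frac{\ball{}{n+1}(x_a)\ballc{}{n+1}(x_b)}{x_a-x_b}$ is by definition $F_{n+1}(x_a,x_b)$, so this piece is $F_{n+1}(x_a,x_b) F(\bx)$, which visibly belongs to $\sF_{n+1}$ (it is $\varphi$ times the original $F_j$-factors times the new factor $F_{n+1}$ indexed by $a,b \in \{1,\dots,2n+2\}$). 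Second I would treat the diagonal part $\ball{}{n+1}(x_a)\ball{}{n+1}(x_b)$. The key move is symmetrization in the pair $(x_a,x_b)$: since both $G$ and $\frac{1}{x_a-x_b}$ are antisymmetric under swapping $x_a \leftrightarrow x_b$, while the measure is symmetric, I replace $F(\bx)$ inside the integral by $\tfrac{1}{2}\big(F(\bx) - F(\bx)|_{x_a \leftrightarrow x_b}\big)$ without changing the integral's value. This produces the integrand
\begin{align*}
\tfrac{1}{2}\,\ball{}{n+1}(x_a)\ball{}{n+1}(x_b)\,\frac{F(\bx) - F(\sigma\bx)}{x_a-x_b}\, G(\bx,x_{2n+1},x_{2n+2};\bz)\,,
\end{align*}
where $\sigma$ swaps $x_a$ and $x_b$.

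**The crux — showing the difference quotient is in $\sF_{n+1}$:**

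The main obstacle is verifying that $\tilde\varphi := \tfrac{1}{2}\,\ball{}{n+1}(x_a)\ball{}{n+1}(x_b)\,\frac{F(\bx)-F(\sigma\bx)}{x_a-x_b}$ is bounded and satisfies the smoothness condition (4) of Definition \ref{class}, so that the diagonal piece is $\tilde\varphi$ times the surviving $F_j$-factors (those not involving $x_a,x_b$). Here I would invoke Lemma \ref{smooth_lemma}: on the event $\ball{}{n+1}(x_a)$, each factor $F(\bx)$ is $C^\infty$ and bounded in the $x_a$-variable (the $\partial B_j$ singularities of the $F_j$'s all sit outside the strictly smaller ball $B_{n+1}$, since $r/(n+1) < r/j$ for $j \le n$, and the prefactor $\varphi$ is smooth there by its own property (4)). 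The same holds in $x_b$ on $\ball{}{n+1}(x_b)$. Therefore $\big(F(\bx)-F(\sigma\bx)\big)\ball{}{n+1}(x_a)\ball{}{n+1}(x_b)$ vanishes on the diagonal $x_a=x_b$ and is a difference of jointly smooth functions of $(x_a,x_b)$ on $B_{n+1}\times B_{n+1}$; hence it is $\mathcal{O}(x_a-x_b)$ with a bounded coefficient, cancelling the $\frac{1}{x_a-x_b}$ singularity. Concretely, writing $F = \varphi \prod_{j\in J} F_j$, any $F_j$-factor that does not depend on $x_a$ or $x_b$ is unaffected by $\sigma$ and factors out; the finitely many factors (including $\varphi$) that do depend on $x_a$ or $x_b$ combine into a single function smooth on $B_{n+1}\times B_{n+1}$, whose difference under $\sigma$ divided by $x_a-x_b$ is bounded and remains $C^\infty$ away from the circles $\partial B_j$ by construction. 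I expect the only genuine bookkeeping difficulty to be tracking which $F_j$-factors involve the swapped indices $a,b$ and confirming that the surviving factors, together with $\tilde\varphi$, still match the template of Definition \ref{class} with $2(n+1)$ variables; the analytic content is entirely supplied by Lemma \ref{smooth_lemma}. Summing the two pieces gives the claimed $\tilde F \in \sF_{n+1}$.
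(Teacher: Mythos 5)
Your proposal is correct and follows essentially the same route as the paper: the same insertion of $1 = \ball{}{n+1}(x_b) + \ballc{}{n+1}(x_b)$, the same identification of the complementary piece as $F_{n+1}(x_a,x_b)F(\bx)$, and the same symmetrization of the diagonal piece using smoothness of $F$ in $x_a,x_b$ on $B_{n+1}$ (Lemma \ref{smooth_lemma}) to write the difference as $(x_a-x_b)$ times a bounded factor. The only detail the paper spells out that you gloss over is the explicit case analysis for how the swap $x_a \leftrightarrow x_b$ acts when $a$ or $b$ lies in $\{2n+1,2n+2\}$, but your bookkeeping remark covers the same ground.
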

\begin{proof}
We insert $1 = \ball{}{n+1}(x_b) + \ballc{}{n+1}(x_b)$ into the integral. We symmetrize the first term to get
\begin{align*}
\frac{1}{2} \int \ball{}{n+1}(x_a) \ball{}{n+1}(x_b) \frac{G(\bx,x_{2n+1}, x_{2n+2};\bz)}{x_a- x_b} (F(\bx) - F(\bx;x_a \leftrightarrow x_b))\,,
\end{align*}
where 
\begin{align*}
F(\bx;x_a \leftrightarrow x_b) &:= F(x_1,\hdots,x_{a-1},x_b,x_{a+1},\hdots,x_{b-1},x_a,x_{b+1},\hdots,x_{2n})\,,
\end{align*}
when $x_a, x_b \in \{x_1,\hdots,x_{2n}\}$, and
\begin{align*}
F(\bx;x_a \leftrightarrow x_b) &:= F(x_1,\hdots,x_{a-1},x_b,x_{a+1},\hdots,\hdots,x_{2n})\,,
\end{align*}
when $x_a \in \{x_1,\hdots,x_{2n}\}$ and $b \in \{x_{2n+1}, x_{2n+2}\}$, and finally
\begin{align*}
F(\bx;x_a \leftrightarrow x_b) &:= F(\bx)\,,
\end{align*}
when $x_a, x_b \in \{ x_{2n+1}, x_{2n+2}\}$. By definition of $\sF_{n+1}$ the function $F$ is smooth in $x_a$ and $x_b$ in the integration domain and thus the difference $F(\bx) - F(\bx;x_a \leftrightarrow x_b)$ can be written as $(x_a-x_b)\varphi(\bx,x_{2n+1}, x_{2n+2}) H(\bx)$ where $H \in \sF_{n+1}$ is not a function of $x_a$ and $x_b$, and $\varphi$ is bounded everywhere and smooth outside of the circles $\partial B_j$. Indeed, the function $H$ will consists of the factors $F_j$ in $F$ that are not functions of $x_a$ and $x_b$. The function $\ball{}{n+1}(x_a) \ball{}{n+1}(x_b) \varphi(\bx,x_{2n+1}, x_{2n+2}) H(\bx)$ belongs to $\sF_{n+1}$.

The $\ballc{}{n+1}(x_b)$-part of the integral immediately becomes
\begin{align*}
\int F_{n+1}(x_a,x_b) F(\bx) G(\bx,x_{2n+1}, x_{2n+2})  \prod_{j=1}^{2n+1} d^2 x_j \,,
\end{align*}
and clearly the function $F_{n+1}(x_a,x_b) F(\bx)$ belongs to $\sF_{n+1}$.
\end{proof}

Next we show that the integrals that pop up when we compute derivatives of $G$ can be expressed in terms of absolutely convergent integrals. 
\begin{proposition}\label{prop2}
Let $F \in \sF_n$, $k \in \{1,\hdots,2n+1\}$ and $i \in \{1,\hdots,N\}$. Then 
\begin{align*}
\int F(\bx) \frac{G(\bx,x_{2n+1};\bz)}{x_k-z_i} \prod_{j=1}^{2n+1} d^2 x_j &= \int \tilde F(\bx,x_{2n+1},x_{2n+2}) G(\bx,x_{2n+1},x_{2n+2};\bz) \prod_{j=1}^{2n+1} d^2 x_j \\
& \quad + \oint_{\partial B_{n+1}} \, d  x_k \, \int F(\bx) G(\bx,x_{2n+1};\bz) \prod_{j=1; j \neq k}^{2n+1} d^2 x_j  \,,
\end{align*}
where $\tilde F \in \sF_{n+1}$.
\end{proposition}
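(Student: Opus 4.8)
The plan is to mimic the $C^2$-computation of the sketch: evaluate an auxiliary integral two ways and solve for the term we want. Concretely, take $i(n+1)=i$ so that $B_{n+1}=B(z_i,r/(n+1))$ is centered at $z_i$, and consider
\[
\Phi := \int \ball{i}{n+1}(x_k)\, F(\bx)\, \partial_{x_k} G(\bx,x_{2n+1};\bz) \prod_{j=1}^{2n+1} d^2 x_j .
\]
Following the Remark before Lemma \ref{symmetrization}, I would carry out every manipulation with the regularized objects $G_\eps$ and $\tfrac{1}{(\cdot)_{\eps,\eps}}$ and pass to the limit $\eps\to0$ only at the end, suppressing the $\eps$'s from the notation below.

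First I would evaluate $\Phi$ by integrating by parts in the variable $x_k$ over the ball $B_{n+1}$. By Lemma \ref{smooth_lemma} the map $x_k\mapsto \ball{i}{n+1}(x_k)F(\bx)$ is $C^\infty$ inside $B_{n+1}$, so the two-dimensional integration by parts produces a boundary term together with a derivative landing on $F$:
\[
\Phi = -\oint_{\partial B_{n+1}} F(\bx)\, G(\bx,x_{2n+1};\bz)\, d x_k \prod_{j\neq k} d^2 x_j - \int \ball{i}{n+1}(x_k)\,(\partial_{x_k}F(\bx))\, G(\bx,x_{2n+1};\bz) \prod_{j=1}^{2n+1} d^2 x_j .
\]
Second, I would evaluate $\Phi$ from the derivative formula \eqref{derivative}, now applied to the $\gamma$-insertion $x_k$ exactly as in \eqref{sketch_f0}, giving
\[
\partial_{x_k} G(\bx,x_{2n+1};\bz) = -\frac{\gamma}{2}\sum_{j=1}^N \frac{\alpha_j}{x_k-z_j}\, G - \frac{\gamma^2}{2}\sum_{\substack{l=1\\ l\neq k}}^{2n+1} \frac{G}{x_k-x_l} + \frac{\mu\gamma^2}{2}\int \frac{G(\bx,x_{2n+1},x_{2n+2};\bz)}{x_k-x_{2n+2}}\, d^2 x_{2n+2},
\]
with $G=G(\bx,x_{2n+1};\bz)$. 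Multiplying by $\ball{i}{n+1}(x_k)F(\bx)$, integrating, equating the two expressions for $\Phi$, and solving for the $j=i$ summand (its coefficient $-\tfrac{\gamma\alpha_i}{2}$ is nonzero; we may assume $\alpha_i\neq0$, since otherwise $V_{\alpha_i}(z_i)\equiv1$ and $z_i$ is not a genuine insertion) expresses $\int \ball{i}{n+1}(x_k)F(\bx)\tfrac{G}{x_k-z_i}$ as a linear combination of the boundary integral and the remaining terms. Adding back the complementary piece $\int \ballc{i}{n+1}(x_k)F(\bx)\tfrac{G}{x_k-z_i}$ recovers the full left-hand side of the claim.

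It then remains to identify every term besides the boundary integral as an $\sF_{n+1}$-integral against $G(\bx,x_{2n+1},x_{2n+2};\bz)$, and here the two key tools are the KPZ identity (Lemma \ref{kpz}) and the Symmetrization Lemma \ref{symmetrization}. The \emph{regular} terms --- the summands $\tfrac{1}{x_k-z_j}$ with $j\neq i$, the complementary piece $\ballc{i}{n+1}(x_k)\tfrac{1}{x_k-z_i}$, and the $\partial_{x_k}F$ term --- all have the form (bounded function, smooth off the circles $\partial B_j$) times (surviving $F_j$-factors), multiplying $G$ with only $2n+1$ insertions; boundedness holds because $x_k$ stays a fixed distance $\geq\delta/2$ from $z_j$ and from $z_i$ in the first two cases, and by Lemma \ref{smooth_lemma} (nesting of the balls) in the last. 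I would lift each of these to $2n+2$ insertions by running KPZ backwards, i.e.\ replacing $G(\bx,x_{2n+1};\bz)$ by $\tfrac{\mu}{s+2n+1}\int G(\bx,x_{2n+1},x_{2n+2};\bz)\,d^2x_{2n+2}$; since the prefactor is constant in $x_{2n+2}$, each becomes an $\sF_{n+1}$-integral. The \emph{singular} terms are the $\tfrac{1}{x_k-x_l}$ summands ($l\neq k$) and the $\tfrac{1}{x_k-x_{2n+2}}$ term: the latter already has $2n+2$ insertions, and for the former I would first use KPZ to introduce $x_{2n+2}$, after which Lemma \ref{symmetrization} with $(a,b)=(k,l)$ (resp.\ $(k,2n+2)$) rewrites them as $\sF_{n+1}$-integrals directly. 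Since $\sF_{n+1}$ is a linear space, all of these combine into a single $\int \tilde F\, G(\bx,x_{2n+1},x_{2n+2};\bz)$ with $\tilde F\in\sF_{n+1}$, which is the first term of the claim; the boundary integral over $\partial B_{n+1}$ is the second.

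Finally I would pass to the limit $\eps\to0$. The $\tilde F$-integrals are absolutely convergent by the $n$-pair fusion estimate (Lemma \ref{fusion}) and the integrability result (Proposition \ref{integrability}), and the boundary integral is finite because on $\partial B_{n+1}$ the variable $x_k$ is bounded away from $z_i$, so $F$ and $G$ are controlled there exactly as in the $C^2$-case; combined with the uniform-in-$\eps$ dominants for $G_\eps$ from \cite{ward}, dominated convergence applies. I expect the main difficulty to be bookkeeping rather than estimation: one must keep the number of $\gamma$-insertions aligned so that Lemma \ref{symmetrization} is applicable --- which is exactly the role of the reverse KPZ step --- and one must check that the extra $\partial_{x_k}F$ term created by the two-dimensional integration by parts introduces no new singularity but remains in $\sF_{n+1}$, which is guaranteed by the nesting of the balls $B_j$ encoded in Lemma \ref{smooth_lemma}.
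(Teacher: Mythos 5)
Your proposal follows essentially the same route as the paper's proof: evaluate $\int \ball{}{n+1}(x_k) F(\bx)\,\partial_{x_k} G(\bx,x_{2n+1};\bz)$ once by the derivative formula and once by integration by parts, solve for the $j=i$ summand, dispose of the $\tfrac{1}{x_k-x_l}$ terms via Lemma \ref{symmetrization}, and control the boundary term as in Lemma \ref{uniform}. The extra details you supply --- adding back the $\ballc{}{n+1}(x_k)$ piece, the reverse-KPZ lift to align the number of $\gamma$-insertions, and the $\alpha_i\neq 0$ caveat --- are correct refinements of steps the paper leaves implicit, not a different argument.
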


\begin{proof}
\begin{align*}
&\int \ball{}{n+1} (x_k) F(\bx) \partial_{x_k} G(\bx,x_{2n+1};\bz) \prod_{j=1}^{2n+1} d^2 x_j \\
&=  -\int \ball{}{n+1}(x_k) F(\bx) G(\bx,x_{2n+1};\bz) \sum_{j=1}^N \frac{\alpha_j \gamma}{2} \frac{1}{x_k-z_j} \prod_{j=1}^{2n+1} d^2 x_j \\
& -\int \ball{}{n+1}(x_k) F(\bx) G(\bx,x_{2n+1};\bz) \sum_{j=1;j \neq k}^{2n+1} \frac{\gamma^2}{2} \frac{1}{x_k-x_j} \prod_{j=1}^{2n+1} d^2 x_j \\
& \quad + \frac{\mu \gamma^2}{2} \int \ball{}{n+1}(x_k) F(\bx) \frac{G(\bx,x_{2n+1},x_{2n+2};\bz)}{x_k-x_{2n+2}} \prod_{j=1}^{2n+2} d^2 x_j\,.
\end{align*}
We want to solve for the $j=i$ term in the first sum. The $j \neq i$ terms need no simplification. The rest of the terms simplify correctly by Lemma \ref{symmetrization}.

When we integrate by parts the left-hand side we get the terms
\begin{align*}
&- \oint_{\partial B_{n+1}} \int F(\bx) G(\bx,x_{2n+1};\bz) \prod_{j=1;j \neq k}^{2n+1} d^2 x_j \, d  x_k - \int \ball{}{n+1}(x_k)  \partial_{x_k} F(\bx) G(\bx,x_{2n+1};\bz) \prod_{j=1}^{2n+1} d^2 x_j \,.
\end{align*}
When $x_k \in B_{n+1}$, $\partial_{x_k} F(\bx)$ can be written as $\varphi(\bx) H(\bx)$ where $H \in \sF_{n+1}$ does not depend on $x_k$ and $\varphi$ is bounded everywhere and smooth outside the circles $\partial B_j$. Thus $\ball{}{n+1}(x_k) \partial_{x_k} F(\bx) \in \sF_{n+1}$.

The boundary integral is convergent because 
\begin{align*}
\sup_{x_k \in \partial B_{n+1}} \left| \int F(\bx) G(\bx,x_{2n+1};\bz)  \prod_{j=1;j \neq k}^{2n+1} d^2 x_j \right| & = C(\delta) < \infty
\end{align*}
by Lemma \ref{uniform}.

\end{proof}

\subsection{Integrability}

Next we show that the integrals appearing in Proposition \ref{prop2} are absolutely convergent.

\begin{proposition}\label{integrability}
Let $F \in \sF_n$. Then
\begin{align*}
\int |F(\bx) G(\bx;\bz) | \dx{2n} < \infty\,.
\end{align*}
\end{proposition}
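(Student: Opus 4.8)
The plan is to reduce, by linearity of the class $\sF_n$ and the triangle inequality, to bounding $\int |F(\bx)| \, G(\bx;\bz) \prod_{k=1}^{2n} d^2 x_k$ for a single term $F(\bx) = \varphi(\bx) \prod_{j \in J} F_j(x_{a(j)}, x_{b(j)})$ of the form in Definition \ref{class}, using $|\varphi| \leq \|\varphi\|_\infty$ and the positivity $G \geq 0$. The first step is the elementary bound \eqref{sketch_f2}, namely
\begin{align*}
|F_j(x,y)| \leq |F_j(x,y)| \, \ba{}{j}(x) \, \bac{}{j}(y) + C_j,
\end{align*}
with $C_j < \infty$, which holds because $F_j(x,y) = \mathbf{1}_{B_j}(x)\mathbf{1}_{(B_j)^c}(y)/(x-y)$ can only be large when both arguments sit near $\partial B_j$, i.e. inside the annulus $A_j$; once one argument leaves $A_j$ the denominator is bounded below. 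Inserting this into each of the $|J|$ factors and expanding the product, I would reduce the estimate to a finite sum indexed by subsets $J' \subseteq J$, each controlled by
\begin{align*}
C' \int \prod_{j \in J'} \frac{\ba{}{j}(x_{a(j)}) \, \bac{}{j}(x_{b(j)})}{|x_{a(j)} - x_{b(j)}|} \, G(\bx;\bz) \prod_{k=1}^{2n} d^2 x_k,
\end{align*}
where $C'$ absorbs $\|\varphi\|_\infty$ and the constants $C_j$ for $j \in J \setminus J'$.

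The decisive structural point is that the annuli $(A_j)_{j=1}^n$ are pairwise disjoint. Consequently, if one variable $x_k$ appeared in two distinct surviving singular factors (indices $j \neq j'$ in $J'$), the indicator product would force $x_k \in A_j \cap A_{j'} = \varnothing$ and the term would vanish. Hence only those terms survive in which the $2|J'|$ variables carried by the singular factors are pairwise distinct: after relabelling, these form $|J'|$ genuine pairs $(u_m, v_m)$ with $u_m \in A_{j_m} \cap B_{j_m}$ and $v_m \in A_{j_m} \cap (B_{j_m})^c$, localised in disjoint annuli, while the remaining $2n - 2|J'|$ variables are free. This is exactly the non-stacking of singularities advertised in the sketch.

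Because every factor is now non-negative, Tonelli's theorem lets me integrate the free variables first. Each free variable is a $\gamma$-insertion, so integrating it over $\C$ by the KPZ identity (Lemma \ref{kpz}, in its immediate generalisation to correlations carrying several $\gamma$-insertions, proved by the same $\mu$-scaling argument) yields a finite positive constant and deletes that insertion, giving
\begin{align*}
\int_{\C^{2n-2|J'|}} G(\bx;\bz) \prod_{k \text{ free}} d^2 x_k = C_{\mathrm{KPZ}} \, G(u_1, v_1, \dots, u_{|J'|}, v_{|J'|}; \bz).
\end{align*}
To the reduced correlation I apply the $n$-pair fusion estimate (Lemma \ref{fusion}) with its parameter equal to $|J'|$ and the annuli $A_{j_m}$, obtaining
\begin{align*}
G(u_1, v_1, \dots, u_{|J'|}, v_{|J'|}; \bz) \prod_{m=1}^{|J'|} \ba{}{j_m}(u_m) \, \bac{}{j_m}(v_m) \leq C_\delta \prod_{m=1}^{|J'|} |u_m - v_m|^{-2+\zeta}.
\end{align*}
Together with the factor $\prod_m |u_m - v_m|^{-1}$ this bounds the integrand by $\prod_m |u_m - v_m|^{-3+\zeta}$ on disjoint annuli, so the integral factorises over $m$ into one-pair integrals of the kind handled in Lemma \ref{integral} with exponent $a = 3 - \zeta < 3$, each finite.

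The hard part is the combinatorial bookkeeping of the second paragraph: one must check that taking the singular part of each $F_j$ really does confine both of its arguments to the single annulus $A_j$, so that disjointness kills every term in which a variable would otherwise carry two simultaneous singularities, leaving a clean product of one-pair integrals. The KPZ reduction of the free variables and the fusion-plus-\ref{integral} endgame are then routine, and the distinction between $G$ and its regularisation $G_\eps$ is handled as in the Remark preceding the proposition, all the bounds above being uniform in $\eps$.
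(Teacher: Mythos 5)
Your proposal is correct and follows essentially the same route as the paper's proof: reduce to a single product term with $\varphi$ bounded, split each $F_j$ into its singular part supported in the annulus $A_j$ plus a constant, use disjointness of the annuli to prevent singularities from stacking on a single variable, integrate away the remaining $\gamma$-insertions with the KPZ identity, and finish with the $n$-pair fusion estimate and Lemma \ref{integral}. The only difference is the order in which the KPZ reduction and the annulus splitting are performed, which is immaterial.
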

\begin{proof}
Let $a(j)$, $b(j)$ be as in Definition \ref{class}. We may assume that $F(\bx) = \varphi(\bx) \prod_{j \in J} F^{\ij}_{j}(x_\kj, x_\tkj)$ where $J \subset \{1,\hdots,n\}$. Since $\varphi$ is globally bounded, we may also assume that $\varphi \equiv 1$. Next, we can apply KPZ-formula of Lemma \ref{kpz} to obtain
\begin{align*}
\int | \prod_{j \in J} F^\ij_j(x_\kj, x_\tkj)| G(\bx;\bz) | \dx{2n} &= C \int | \prod_{j \in J} F_j(x_\kj, x_\tkj)| G(\{x_\kj, x_\tkj \}_{j \in J}; \bz) \dxj{J} \,,
\end{align*}
i.e. we integrate away the $\gamma$-insertions which do not appear in the function $F$. We split the integrals and estimate as follows
\begin{align*}
|F^\ij_j(x_\kj,x_\tkj)| &= |F^\ij_j(x_\kj,x_\tkj)| (\ann{\ij}{j}(x_\kj) + \annc{\ij}{j}(x_{\kj}))(\ann{\ij}{j}(x_\tkj) + \annc{\ij}{j}(x_\tkj)) \\
& \leq |F^\ij_j(x_\kj,x_\tkj)| \ann{\ij}{j}(x_\kj) \ann{\ij}{j}(x_\tkj) + C \,.
\end{align*}
Thus 
\begin{align*}
& \int | \prod_{j \in J} F^\ij_j(x_\kj, x_\tkj)| G(\{x_\kj, x_\tkj \}_{j \in J}; \bz)d^2 x_\kj \, d^2 x_\tkj \\
& \leq C  \sum_{J' \subset J} \int \prod_{j \in J'} |F^\ij_j(x_\kj, x_\tkj)|  \ann{\ij}{j}(x_\kj) \ann{\ij}{j}(x_\tkj) G(\{x_\kj, x_\tkj \}_{j \in J'}; \bz) \, d^2 x_\kj \, d^2 x_\tkj \,.
\end{align*}
Each of the $x$-variables appear in only one of the $F^\ij_j$-factors since otherwise the integrand vanishes (because $A_j$ are disjoint). Thus we can use the $n$-pair fusion estimate  (Lemma \ref{fusion}) to get
\begin{align*}
& \leq C \sum_{J' \subset J} \int \prod_{j \in J'} |F^\ij_j(x_\kj, x_\tkj)|  \ann{\ij}{j}(x_\kj) \ann{\ij}{j}(x_\tkj)  |x_{\kj}-x_\tkj|^{-2+\zeta} \, d^2 x_\kj \, d^2 x_\tkj \\
&= C \sum_{J' \subset J} \int \prod_{j \in J'}  \ba{\ij}{j}(x_\kj) \bac{\ij}{j}(x_\tkj)   |x_{\kj}-x_\tkj|^{-3+\zeta} \, d^2 x_\kj \, d^2 x_\tkj\,.
\end{align*}
This converges by Lemma \ref{integral}.
\end{proof}
\subsection{Boundary terms}\label{boundary_terms}

We still have to show that the boundary term appearing in Proposition \ref{prop2} is integrable and that the boundary terms are differentiable and satisfy the analogues of Propositions \ref{prop2} and \ref{integrability}.

\begin{lemma}\label{uniform}
Let $J \subset \{1,\hdots,n\}$, $L \subset \{1,\hdots,2n\}$ and  $(K_j)_{j \in J^c}$ be compact subsets of $\C$ that are disjoint from each other and the circles $\partial B(z_{i(j)},r/j)$. For each $j \in J$ fix numbers $a(j),b(j) \in \{1,\hdots,2n\}$, $a(j) \neq b(j)$. Denote $\bx = (x_1,\hdots,x_{2n})$ and let $\bx \mapsto \varphi(\bx)$ be bounded. Then 
\begin{align*}
\sup_{i \in L^c} \sup_{x_i \in K_i} \int \prod_{l \in L} |\varphi(\bx)| \prod_{j \in J} |F_j(x_\kj, x_\tkj)| G(\bx;\bz)  d^2 x_l  < \infty \,.
\end{align*}
\end{lemma}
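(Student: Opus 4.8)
The plan is to run the argument of Proposition \ref{integrability}, but now keeping the frozen variables $\{x_i\}_{i \in L^c}$ as spectator $\gamma$-insertions and tracking the uniformity of every constant as they range over the compact sets. As preliminary reductions: since $\varphi$ is globally bounded we absorb it and take $\varphi \equiv 1$; since each $K_i$ is compact and disjoint from the circles $\partial B_j$ we shrink the annuli $A_j$ so that each is also disjoint from every $K_i$ (this respects Definition \ref{class}, which only asks that $A_j$ contain $\partial B_j$ and be mutually disjoint); and we assume the $K_i$ are bounded away from $z_1,\hdots,z_N$, which holds in every application (for the boundary term of Proposition \ref{prop2} the frozen variable lies on $\partial B_{n+1}$, hence at distance $r/(n+1)$ from $z_{i(n+1)}$ and $>\delta/2$ from the other $z_k$).

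First I would split each factor by the annular decomposition used in Proposition \ref{integrability},
\begin{align*}
|F_j(x_\kj,x_\tkj)| & \leq |F_j(x_\kj,x_\tkj)| \ba{}{j}(x_\kj)\bac{}{j}(x_\tkj) + C\,,
\end{align*}
and expand the product over $j \in J$ into a sum over subsets $J' \subseteq J$, keeping the singular annular factor for $j \in J'$ and the constant otherwise. In each resulting term the disjointness of the annuli gives two structural facts: a variable cannot lie in two kept factors (it would be forced into $A_j \cap A_{j'} = \emptyset$), and a frozen variable $x_i$, $i \in L^c$, cannot appear in a kept factor, since its argument is confined to $A_j$ while $x_i \in K_i$ is disjoint from $A_j$. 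Thus every non-vanishing term has both arguments of each kept $F_j$ equal to distinct integration variables, and all remaining integration variables appear only inside $G$; these I would integrate out one at a time with the KPZ-identity (Lemma \ref{kpz}), each integration removing a $\gamma$-insertion at the cost of a finite positive constant (finite by the Seiberg bound).

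At this stage each term equals $C$ times $\int \prod_{j \in J'} |F_j(x_\kj,x_\tkj)| \ba{}{j}(x_\kj)\bac{}{j}(x_\tkj)\, G\, \dxj{J'}$, where the $\gamma$-insertions of $G$ are the paired integration variables together with the finitely many frozen insertions. I would then bound $G$ by the $n$-pair fusion estimate (Lemma \ref{fusion}) as $C \prod_{j \in J'} |x_\kj - x_\tkj|^{-2+\zeta}$; combined with $|F_j| = |x_\kj - x_\tkj|^{-1}$ on its support this gives an integrand $\prod_{j \in J'} \ba{}{j}(x_\kj)\bac{}{j}(x_\tkj)|x_\kj - x_\tkj|^{-3+\zeta}$. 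Since the kept factors live on the disjoint annuli and no longer depend on the frozen variables, the integral factorizes into a product of integrals of the form $\int \ba{}{j}(x)\bac{}{j}(y)|x-y|^{-3+\zeta}\,d^2x\,d^2y$, each finite by Lemma \ref{integral} (as $3-\zeta<3$); being independent of $\{x_i\}_{i \in L^c}$, these bounds yield the claimed uniform finiteness.

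The main obstacle is the last bound: Lemma \ref{fusion} was stated for a correlation function whose $\gamma$-insertions are all paired, whereas here $G$ carries the extra frozen insertions and I need the constant uniform as they range over the $K_i$. I would resolve this by revisiting the proof of Lemma \ref{fusion}. Through (\ref{reduction_to_chaos}) the frozen insertions, being bounded away both from the fusing regions $D_j$ and from the $z_k$, only shift the GMC-moment exponent $q$ and contribute weights and prefactors that are bounded above and below on the $D_j$ uniformly over the compact sets. The Kahane-convexity and GMC-moment estimates then go through as before, producing a constant that depends only on the minimal separation of all frozen points, which stays uniformly positive; this is exactly the uniformity required.
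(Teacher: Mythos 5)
Your proof is correct and follows essentially the same route as the paper: the paper's own proof simply observes that the constant $C_\delta$ in Lemma \ref{fusion} can be taken of the form $C\delta^a$ and hence is uniformly bounded as the separation parameter ranges over a compact set away from the origin, leaving implicit exactly the rerun of Proposition \ref{integrability} with frozen spectator $\gamma$-insertions that you spell out. Your added hypothesis that the $K_i$ also stay away from the points $z_1,\hdots,z_N$ is a necessary reading of the statement (otherwise the prefactor $|x_i-z_k|^{-\gamma\alpha_k}$ is unbounded) and, as you note, holds in every application.
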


\begin{remark}
The role of this lemma is to show that the boundary integrals in \eqref{proof_simplified_derivative} converge. In this case the boundary integrals are over circles which are examples of the sets $K_j$ appearing in the statement of the lemma.
\end{remark}

\begin{proof}
This follows from the fact that in the proof of Lemma \ref{fusion} the $\delta$-dependent constant $C_\delta$ satisfies $\sup_{\delta \in K} |C_\delta| < \infty$ whenever $K$ is a compact set disjoint from the origin. This is easy to see since by taking a bit more care of $C_\delta$ one sees that it can be chosen to be $C \delta^a$ for some constant $C$ and  $a <0 $.
\end{proof}

The above lemma says that the boundary integrals that we see are integrals of bounded functions over compact sets so they converge. In addition to this we need that the boundary integral terms appearing in our iteration are differentiable with respect to the insertions $z_i$. This follows from the following lemma.
\begin{lemma}\label{boundary_derivative}
Let $J \subset \{1,\hdots,n\}$, $L \subset \{1,\hdots,2n\}$ and $i \in \{1,\hdots,N\}$. Fix some indices $a(j),b(j) \in \{1,\hdots,2n\}$, $a(j) \neq b(j)$, for each $j \in J$.

Then for $k \in L \cup \{2n+1\}$ we have
\begin{align*}
&\prod_{l_2 \in L^c} \mathbf{1}_{\partial B_{l_2}}(x_{l_2}) \prod_{l_1 \in L} \int \prod_{j \in J} F_{j}(x_{a(j)}, x_{b(j)}) \frac{G(\bx,x_{2n+1};\bz)}{x_k - z_i} \, d^2 x_{l_1} \, d^2 x_{2n+1}\\
&= \prod_{l_2 \in L^c} \mathbf{1}_{\partial B_{l_2}}(x_{l_2}) \prod_{l_1 \in L} \int  \tilde F(\bx,x_{2n+1},x_{2n+2})  G(\bx,x_{2n+1},x_{2n+2};\bz) \, d^2 x_{l_1} \, d^2 x_{2n+1} \, d^2 x_{2n+2} \\
& \quad + \prod_{l_2 \in L^c} \mathbf{1}_{\partial B_{l_2}}(x_{l_2}) \oint_{\partial B_{n+1}} \, d  x_k\, \prod_{l_1 \in L \setminus \{k\}} \int \prod_{j \in J \setminus \{k\}} F_{j}(x_{a(j}, x_{b(j)})  G(\bx,x_{2n+1};\bz) \, d^2 x_{l_1} \, d^2 x_{2n+1}\,,
\end{align*}
where $\tilde F \in \sF_{n+1}$.

For $k \in L^c$ we have the same formula without the boundary integral term since 
\begin{align*}
\prod_{j \in J} F_{j}(x_{a(j)}, x_{b(j)}) \frac{1}{x_k - z_i} \in \sF_n\,.
\end{align*}
\end{lemma}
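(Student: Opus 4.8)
The plan is to rerun, almost verbatim, the two-way computation from the proof of Proposition~\ref{prop2}, now carrying the variables indexed by $L^c$ along as frozen parameters pinned to their circles $\partial B_{l_2}$. As in the Remark preceding Lemma~\ref{symmetrization}, I would work with the regularized $G_\eps$ and $\frac{1}{(x-y)_{\eps,\eps}}$ and pass to the $\eps\to0$ limit only at the end; this is what assigns a meaning to the (not absolutely convergent) integral carrying the factor $\frac{1}{x_k-z_i}$.

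For the main case $k\in L\cup\{2n+1\}$, where $x_k$ is one of the genuinely integrated variables, I would first split $\frac{1}{x_k-z_i}=\ball{}{n+1}(x_k)\frac{1}{x_k-z_i}+\ballc{}{n+1}(x_k)\frac{1}{x_k-z_i}$, where $B_{n+1}=B(z_i,r/(n+1))$ is the ball centred at the differentiation point $z_i$. On the complement, $\ballc{}{n+1}(x_k)\frac{1}{x_k-z_i}$ is bounded on $\C$ and smooth off $\partial B_{n+1}$, so together with $\prod_{j\in J}F_j$ it already lies in $\sF_{n+1}$ and is absorbed into $\tilde F$. For the $B_{n+1}$-part I would consider the auxiliary quantity in which $\frac{1}{x_k-z_i}$ is replaced by $\partial_{x_k}G$, namely $\prod_{l_2\in L^c}\mathbf{1}_{\partial B_{l_2}}(x_{l_2})\prod_{l_1\in L}\int \ball{}{n+1}(x_k)\prod_{j\in J}F_j\,\partial_{x_k}G(\bx,x_{2n+1};\bz)$ integrated over the $L$-variables and $x_{2n+1}$, and evaluate it two ways. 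Applying the derivative formula~\eqref{derivative} at the $\gamma$-insertion $x_k$ gives the three families $\frac{1}{x_k-z_j}$ ($1\leq j\leq N$), $\frac{1}{x_k-x_m}$ ($m\neq k$, over the remaining $\gamma$-insertions), and the new-insertion integral carrying $\frac{1}{x_k-x_{2n+2}}$; integrating by parts in $x_k$ (legitimate since $x_k$ is integrated) gives the boundary integral $-\oint_{\partial B_{n+1}}dx_k$ plus an interior term $-\int\ball{}{n+1}(x_k)\partial_{x_k}\bigl(\prod_{j\in J}F_j\bigr)G$. Equating the two expressions and solving for the $j=i$ term produces the asserted identity.

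It then remains to recognise each surviving term. The crucial gain from inserting $\ball{}{n+1}(x_k)$ is that $x_k$ is confined to the small ball $B_{n+1}$, which is disjoint from every $z_j$ with $j\neq i$ and from every circle $\partial B_m$ with $m\leq n$ (because $r/m>r/(n+1)$); consequently the factors $\frac{1}{x_k-z_j}$ with $j\neq i$ and the factors $\frac{1}{x_k-x_m}$ whose $x_m$ is a frozen boundary variable are bounded and smooth off the circles, so they fold into the amplitude $\varphi$. The factors $\frac{1}{x_k-x_m}$ with $x_m$ integrated, and the factor $\frac{1}{x_k-x_{2n+2}}$, are converted into $\sF_{n+1}$-form by Lemma~\ref{symmetrization} applied to the inner integral (with the frozen variables treated as parameters), while $\ball{}{n+1}(x_k)\partial_{x_k}\bigl(\prod_{j\in J}F_j\bigr)\in\sF_{n+1}$ by Lemma~\ref{smooth_lemma}. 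Convergence of the boundary integral over $\partial B_{n+1}$ and of the resulting $\sF_{n+1}$-integrals follows from Lemma~\ref{uniform} and the $n$-pair fusion estimate (Lemma~\ref{fusion}), exactly as in Proposition~\ref{integrability}.

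Finally, the case $k\in L^c$ is immediate: there $x_k$ is pinned to $\partial B_k$, which is separated from $z_i$ (it is either centred at $z_i$ with positive radius or centred at a point at distance $\geq\delta$ from $z_i$), so $\frac{1}{x_k-z_i}$ is bounded and smooth in a neighbourhood of that circle; multiplying it by a cutoff supported away from $z_i$ does not change the integral and turns it into a legitimate bounded amplitude in the sense of Definition~\ref{class}, whence $\prod_{j\in J}F_j(x_{a(j)},x_{b(j)})\frac{1}{x_k-z_i}\in\sF_n$ and no boundary term arises. The main thing to verify — and the only real obstacle — is that freezing the $L^c$-variables on the circles disturbs none of these steps: the derivative formula and the integration by parts act only on the integrated variable $x_k$, so Lemma~\ref{symmetrization} is applied to the inner integral with the frozen variables as parameters, and one must read off from the proofs of Lemmas~\ref{uniform} and \ref{fusion} that the constants produced are uniform over the frozen variables, which is guaranteed by the $C_\delta\leq C\delta^a$ control established in the proof of Lemma~\ref{uniform}.
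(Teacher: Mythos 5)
Your proposal is correct and follows essentially the same route as the paper, whose proof of Lemma~\ref{boundary_derivative} simply states that the case $k \in L \cup \{2n+1\}$ is ``exactly the same as for Proposition~\ref{prop2}'' and that the case $k \in L^c$ follows from the definition of $\sF_n$. You merely spell out the details the paper leaves implicit (freezing the $L^c$-variables as parameters, uniformity of the constants via Lemma~\ref{uniform}, and the boundedness of $\frac{1}{x_k - z_i}$ on the relevant circle), all of which is consistent with the paper's intent.
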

\begin{proof}
The case $k \in L \cup \{2n+1\}$ is exactly the same as for Proposition \ref{prop2} and the case $k \in L^c$ follows from the definition of $\sF_n$.
\end{proof}

\subsection{Proof of Theorem \ref{theorem}}
By combining the derivative formulas (\ref{derivative}) and (\ref{c1}) with Proposition \ref{prop2} and the corresponding results for the boundary terms in Section \ref{boundary_terms}, we see that 
\begin{align}\label{proof_simplified_derivative}
&\partial_{z_{i(1)}} \hdots \partial_{z_{i(n)}} G(\bz) \nonumber \\
& \quad = \sum_{k=1}^{2n} \sum_{J \subset \{1,\hdots,2k\}} C_{J,k} \prod_{j_1 \in J} \prod_{j_2 \in J^c} \oint_{\partial B_{j_2}} d  x_{j_2} \int_\C d^2 x_{j_1} F_{J,k}(x_1,\hdots,x_{2k}) G(\bx;\bz)\,,
\end{align}
where $\bx = (x_1,\hdots,x_{2k})$, $C_{J,k}$ are some constants and $F_{J,k} \in \sF_k$ is a linear combination of functions of the form
\begin{align*}
F_{J,k}(\bx) &= \varphi(\bx) \prod_{j \in J} F_j(x_\kj, b_\tkj)\,,
\end{align*} 
where $\varphi$ is bounded and $a(j) \neq b(j)$ are some arbitrary choice of indices. Combining this with Proposition \ref{integrability} together with Section \ref{boundary_terms} we see that all these integrals are absolutely convergent.

Taking $\partial_{\bar z_i}$-derivatives works the same way since in the derivation of the derivative formula (\ref{derivative}) one uses Gaussian integration by parts, which leads to terms containing derivatives of the form $\partial_x C_g(x,y)$ of the correlation of the GFF. When computing $\partial_{\bar z_i}$ instead of $\partial_{z_i}$ the derivative $\partial_x C_g(x,y)$ gets replaced by $\partial_{\bar x} C_g(x,y)$ and the essential term in $C_g(x,y)$ is $\ln \frac{1}{|x-y|}$ which is symmetric in $x-y$ and $\bar x - \bar y$. So in the end everything works the same way in the $\partial_{\bar z_i}$ case.

We have established the smoothness in the case of the round metric (\ref{round_metric}). The generalization for any diagonal metric follows from the Weyl anomaly (Theorem 3.11 in \cite{DKRV})
\begin{align*}
\langle \prod_{i=1}^N V_{\alpha_i}(z_i) \rangle_{e^\varphi g} &= e^{A(\varphi)} \langle \prod_{i=1}^N V_{\alpha_i}(z_i) \rangle_g\,,
\end{align*}
where
\begin{align*}
e^{A(\varphi)} &= \exp \left( \frac{c_L-1}{24 \pi }  \int ( |\partial_z \varphi(z)|^2  + \tfrac{1}{2} g(z) R_g(z) \varphi(z) ) d^2 z  \right)\,. 
\end{align*}
Now clearly if we have smoothness in the metric $g$, then we get smoothness for any metric $e^\varphi g$ in the same conformal class, that is, for any diagonal metric.

\begin{flushright}
$\qed$
\end{flushright}

\appendix

\section{Sobolev spaces on the Riemann sphere}\label{sobolev_appendix}

Let $g$ be a Riemannian metric on the Riemann Sphere $\rs$. The associated Laplace-Beltrami operator is given by
\begin{align*}
\Delta_g &:= \frac{1}{\sqrt{\det g}} \sum_{i,j=1}^2 \partial_i ( \sqrt{\det g} g^{ij} \partial_j)\,,
\end{align*}
where $(g^{ij})_{i,j=1}^2$ are the components of the inverse of $g$. The operator $-\Delta_g$ has eigenfunctions $(\varphi_i)_{i=0}^\infty$ with non-negative eigenvalues $(\lambda_i)_{i=0}^\infty$. The eigenvalues satisfy
\begin{align*}
0 = \lambda_0 < \lambda_1 < \lambda_2 < \hdots
\end{align*}
and the eigenfunctions $(\varphi_i)_{i=0}^\infty$ form an orthonormal basis of $L^2(\rs,g)$.

We define the Sobolev spaces $\ssob$ on the Riemann sphere by
\begin{align}\label{sobolev}
\ssob &:= \left \{ f = \sum_{i=0}^\infty f_i \varphi_i : (f_i)_{i=0}^\infty \in \R^\N, \; \|f\|_\ssob := \sum_{i=1}^\infty |f_i|^2 \lambda_i^s  < \infty \right\}\,.
\end{align}
We denote the subspace of zero mean elements of $\sob$ by $\sobo$
\begin{align*}
\ssobo &:= \{ f \in \ssob :  \int_\C f(z) \,  \opn{vol}_g(d^2z) = 0 \}\,,
\end{align*}
where $\opn{vol}_g$ is the volume form of $g$. For $s<0$ this means the elements satisfying $\langle f,1\rangle = 0$ where $\langle \cdot , \cdot \rangle$ is the dual bracket. From (\ref{sobolev}) we see that this subspace corresponds to the elements satisfying $f_0=0$. From the sequence representation it is easy to see that the continuous dual of $\ssob$ is $\ssobd$ and the continuous dual of $\ssobo$ is $\ssobod$. The zero mean spaces become Hilbert spaces when endowed with the inner product
\begin{align*}
\langle f,h \rangle_\ssobo &:= \sum_{i=1}^\infty f_i h_i \lambda_i^s\,.
\end{align*}
Note that in the case $s=1$ the Sobolev norm agrees with the Dirichlet energy
\begin{align*}
\langle f,h \rangle_\sobo^2 &=  \int_\C \nabla_g f(z) \cdot \nabla_g h(z) \, \opn{vol}_g(d^2z) z\,, \quad \|f\|_\sobo^2 = \langle f,f, \rangle_\sobo\,,
\end{align*}
where $\nabla_g$ is the $g$-gradient. The covariance of the zero mean GFF satisfies
\begin{align*}
\E[X(f)X(h)] &= \int_{\C^2} f(z) h(w) C_g(z,w) \,\vol{z} \vol{w}= \langle f,h \rangle_{\dsobo}
\end{align*}
for any $f,h \in \sobo$.

\section{Lemma for the fusion estimate}\label{fusion_appendix}

In this section we work with the GFF with zero mean over the unit circle, which we denote by $X_0$. It has the covariance
\begin{align}\label{new_covariance}
C_0(x,y) &= \ln \frac{1}{|x-y|} + \mathbf{1}_{\{|x| \geq 1\}} \ln |x|  + \mathbf{1}_{\{|y| \geq 1\}} \ln |y| \,.
\end{align}
Changing the zero mean GFF to the zero circle average GFF corresponds to shifting the constant $c$ in $c+X$. Indeed if $X$ is the GFF with zero mean over the whole complex plane, then
\begin{align*}
c + X \overset{law}{=} c + X -  \int_0^{2\pi} X(e^{i \theta}) \frac{d \theta}{2\pi}\,.
\end{align*}
The term $X - \int_0^{2\pi} X(e^{i\theta}) \frac{d \theta}{2\pi}$ can be identified as the zero circle average GFF $X_0$.

We use the radial decomposition of the GFF
\begin{align*}
X_0(x) &= X^r(|x|) + Y(x)\,,
\end{align*}
where $t \mapsto X^r(e^{-t})-X^r(1)$ is the Brownian motion and $Y$ is a Gaussian process called the lateral noise, see \cite{seiberg}. Plugging this into the chaos measure $M_\gamma^0$ of $X_0$ we get
\begin{align*}
M^0_\gamma(d^2x) &= c_\gamma g(x) |x|^{\frac{\gamma^2}{2}} e^{\gamma X^r(|x|)} M_\gamma(d^2x, Y)\,,
\end{align*}
where $M_\gamma(d^2 x, Y)$ is the GMC measure of the Gaussian field $Y$. Inside the unit disk (the purely log-correlated region of $X_0$) integrals of the GMC measure can now be written as
\begin{align}\label{radial_decomposition_integral}
\int_{\D} f(x) M^0_\gamma(d^2 x) &= c_\gamma \int_0^\infty \int_0^{2\pi} f(e^{-s} e^{i \sigma}) e^{\gamma B_s - \gamma Q s} g(e^{-s}) \mu_Y(ds,d\sigma)\,, 
\end{align}
where $B_s = X^r(e^{-s})$ is a Brownian motion and $\mu_Y$ is independent of $(B_s)_{s \geq 0}$.

Recall the formula
\begin{align}\label{appendix_correlation}
G(\bz) &= 2 B(\alpha) \mu^{-q} \gamma^{-1} \Gamma(q) \prod_{i<j} \frac{1}{|z_i-z_j|^{\alpha_i \alpha_j}} \E \left[ \left( \int_\C \mathcal{F} (x,\bz) M^0_\gamma(d^2 x) \right)^{-q}  \right]\,,
\end{align}
where $q = \frac{\sum_{i=1}^N \alpha_i - 2Q}{\gamma}$, $B(\alpha)$ is a constant depending on $\gamma$ and the $\alpha_i$'s and
\begin{align*}
\mathcal{F} (x,\bz) &= \prod_{i=1}^N \left( \frac{g(x)^{-\frac{1}{4}}}{|x-z_i|} \right)^{\gamma \alpha_i}\,.
\end{align*}
We want to derive a fusion estimate for $G(\bx,\by;\bz)$ in the case when $x_j$ merges with $y_j$ while the pair $(x_j,y_j)$ stays away from all the other insertions. Using (\ref{appendix_correlation}) we get 
\begin{align}\label{appendix_estimate}
G_\eps(\bx,\by;\bz) \prod_{j=1}^n \ba{i(j)}{j}(x_j) \bac{i(j)}{j}(y_j) & \leq C(\delta, \alpha, g) \prod_{j=1}^n |x_j-y_j|_\eps^{-\gamma^2} I_\eps\,, 
\end{align}
where $A_j$ and $B_j$ are as in the beginning of Section \ref{section_proof} and
\begin{align*}
I_\eps &= \E \left[ \left( \sum_{j=1}^n \int_{D_j}  \frac{1}{ |x-x_j|_\eps^{ \gamma^2} |x-y_j|_\eps^{\gamma^2}} M^0_{\gamma,\eps}(d^2 x) \right)^{-q} \right]\,,
\end{align*}
where $q = \frac{2 n \gamma + \sum_{i=1}^N \alpha_i - 2Q}{\gamma}$ and $D_j$ is an annulus with center at $x_j$, inner radius $|x_j-y_j|$ and outer radius $R_j$. We want to choose such $R_j$ that for $x \in D_j$ we have 
\begin{align*}
&\prod_{i=1}^N \left( \frac{g(x)^{-\frac{1}{4}}}{|x-z_i|_\eps} \right)^{\gamma \alpha_i} \prod_{i=1}^n \left( \frac{g(x)^{-\frac{1}{4}}}{|x-x_i|_\eps} \right)^{\gamma^2}  \prod_{i=1}^n \left( \frac{g(x)^{-\frac{1}{4}}}{|x-y_i|_\eps} \right)^{\gamma^2}  \\
& \quad \quad  \geq C(\bz,\alpha,g)  \frac{1}{|x-x_j|_\eps^{\gamma^2} |x-y_j|_\eps^{\gamma^2}}
\end{align*}
We can assume that $|x_j-y_j|$ is really small for every $j \in \{1,\hdots,n\}$. Then we choose any $R_j$ such that $D_j \subset A_j$ and the above estimate holds. We can also estimate
\begin{align*}
\int_\C  \mathcal{F}_\eps(x,\bz) M^0_{\gamma,\eps}(d^2x) & \geq \sum_{j=1}^n \int_{D_j} \mathcal{F}_\eps(x,\bz) M^0_{\gamma,\eps}(d^2 x)\,.
\end{align*}
By combining these two estimates we get (\ref{appendix_estimate}).

Next we derive another estimate that we will need. Inside $D_j$ we have  $|x-y_j|_\eps \leq C |x-x_j|_\eps$. Also, the correlation of the regularized field $X_{\eps}$ satisfies
\begin{align*}
\E[X_{0,\eps}(x)X_{0,\eps}(y)] & \leq C + \E[X_0(x)X_0(y)]\,,
\end{align*}
where $C$ is uniform in $\eps$, and thus we can use the Kahane Convexity Inequality \ref{kahane} to pass to the non-regularized measure $M^0_\gamma(d^2x)$. Without loss of generality we can assume that the points $x_j$ and $y_j$ fuse at the origin. By using the radial decomposition (\ref{radial_decomposition_integral}) of the GFF about the origin we get
\begin{align*}
&\E \left[ \left( \int_{D_j}  \frac{1}{|x-x_j|_\eps^{2\gamma^2}} M^0_{\gamma}(d^2 x) \right)^{-\frac{q}{n}} \right] = c_\gamma \E \left[ \left(  \int_0^{- \ln |x_j-y_j|_\eps} \int_0^{2\pi} e^{\gamma P_s} \mu_Y(ds, d \sigma) \right)^{-\frac{q}{n}}  \right]\,,
\end{align*}
where $P_s = B_s + (2\gamma -Q)s$. We split the integral by using the following events
\begin{align*}
M_{j,k} &= \{ \max_{s \in [0, - \ln |x_j-y_j|_\eps]} P_s \in [k-1,k] \}\,, \quad k \geq 1 \,, \\
M_{j,0} &= \{  \max_{s \in [0,- \ln |x_j-y_j|_\eps]} P_s \leq 0\}\,.
\end{align*}
We estimate the resulting integrals using the following lemma which is a special case of Lemma 6.5 in \cite{ward}.
\begin{lemma}\label{appendix_brownian_lemma}
Let $P_s = B_s + (2 \gamma-Q)s$ where $(B_s)_{s \geq 0}$ is a Brownian motion. Then for all $q>0$ we have
\begin{align*}
\E \left[ \frac{\mathbf{1}_{\{\sup _{u \in [0,r]} P_u \in [k-1,k]  \}}}{\left( \int_0^r \int_0^{2\pi} e^{\gamma P_s} \mu_Y (ds,d\sigma)  \right)^q} \right] & \leq C(k+1) e^{ (2 \gamma - Q - q \gamma)k} r^{- \frac{3}{2}} e^{- \frac{(2 \gamma-Q)^2}{2} r}\,. 
\end{align*}
\end{lemma}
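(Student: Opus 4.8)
The plan is to remove the drift of $P$ by a Cameron--Martin/Girsanov change of measure, and then to bound the resulting driftless quantity by combining a lower bound on the GMC integral in the denominator with an explicit Brownian band estimate. Write $\nu = 2\gamma - Q$, so that $P_s = B_s + \nu s$. Since the lateral noise measure $\mu_Y$ is independent of the radial Brownian motion $(B_s)$, the Cameron--Martin theorem lets us replace the drifted motion $P$ by a standard Brownian motion $B$ at the cost of the Radon--Nikodym weight $e^{\nu B_r - \frac{\nu^2}{2}r}$, which acts only on the $B$-dependent factors. The deterministic factor $e^{-\frac{\nu^2}{2}r}$ comes out of the expectation and already produces the claimed $e^{-\frac{(2\gamma-Q)^2}{2}r}$, while the event $\{\sup_{[0,r]}P \in [k-1,k]\}$ turns into $\{\sup_{[0,r]}B \in [k-1,k]\}$. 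Thus it remains to prove
\begin{align*}
\E\left[ \frac{\mathbf 1_{\{\sup_{[0,r]} B \in [k-1,k]\}}\, e^{\nu B_r}}{\left(\int_0^r\int_0^{2\pi} e^{\gamma B_s}\mu_Y(ds,d\sigma)\right)^q} \right] \leq C(k+1)\, e^{(\nu - q\gamma)k}\, r^{-3/2}\,.
\end{align*}

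Second, I would extract the factor $e^{-q\gamma k}$ from the denominator. On the band event the running maximum of $B$ lies in $[k-1,k]$, so let $\tau = \inf\{s : B_s = k-1\}\leq r$ be the first passage time to level $k-1$. Restricting the $s$-integral to $[\tau,r]$ and using the strong Markov property at $\tau$, the shifted process $\hat B_u := B_{\tau+u}-(k-1)$ is a standard Brownian motion independent of $\mathcal F_\tau$, and
\begin{align*}
\int_0^r\int_0^{2\pi} e^{\gamma B_s}\mu_Y(ds,d\sigma) \;\geq\; e^{\gamma(k-1)} \int_0^{r-\tau}\int_0^{2\pi} e^{\gamma \hat B_u}\,\hat\mu_Y(du,d\sigma)\,,
\end{align*}
so the denominator raised to $-q$ is bounded by $e^{-q\gamma(k-1)}$ times a negative moment of a GMC mass run for the residual time $r-\tau$. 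Together with the bound $B_r \leq \sup_{[0,r]}B \leq k$ (which controls $e^{\nu B_r}$ by $e^{\nu k}$ when $\nu\geq 0$, the opposite regime being handled by coupling the endpoint weight to the same descending part of the path that enlarges the denominator), this isolates the exponential-in-$k$ behaviour $e^{(\nu - q\gamma)k}$.

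Third, the polynomial factor $(k+1)r^{-3/2}$ and the uniformity in $r$ come from the Brownian band/first-passage estimate for $\{\sup_{[0,r]}B \in [k-1,k]\}$ (via the reflection principle and the $t^{-3/2}$ first-passage density) combined with the uniform finiteness of negative moments of the GMC mass. The main obstacle is precisely the coupling in this last step: the three ingredients --- the endpoint weight $e^{\nu B_r}$, the band constraint, and the negative GMC moment over the \emph{possibly short} residual time $r-\tau$ --- are correlated, and one must bound their joint expectation to obtain a clean product without spurious cross terms (in particular the negative GMC moment must stay controlled as $r-\tau \to 0$). This is exactly the content of the special case of Lemma 6.5 in \cite{ward}, so one may alternatively specialize that statement directly.
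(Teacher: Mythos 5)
The paper gives no proof of this lemma at all: it is stated and used purely as a special case of Lemma 6.5 of \cite{ward}, so the fallback in your last sentence is, verbatim, the paper's entire argument. Your attempted direct proof is a reasonable inventory of the ingredients (the Girsanov step is correct and does produce the factor $e^{-\frac{(2\gamma-Q)^2}{2}r}$), but it has two genuine gaps, only one of which you acknowledge. The first is the one you flag, and it is fatal as set up: after conditioning on the first passage time $\tau$ to level $k-1$, the negative moment $\E\bigl[\bigl(\int_0^{r-\tau}\int_0^{2\pi}e^{\gamma\hat B_u}\hat\mu_Y(du,d\sigma)\bigr)^{-q}\bigr]$ diverges polynomially as $r-\tau\to0$, and integrating this blow-up against the first-passage density near $t=r$ is not finite once $q$ is large. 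The standard repair is not to use the whole residual segment $[\tau,r]$ but a \emph{unit-length} window around the time the maximum is attained: there $P\geq k-1-\operatorname{osc}$, the oscillation has Gaussian tails, and the lateral GMC mass of a unit $s$-interval has negative moments of all orders uniformly, which is how the factor $e^{-q\gamma k}$ is actually extracted.

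The second gap is the claim that $(k+1)r^{-3/2}$ comes from ``the Brownian band/first-passage estimate combined with the uniform finiteness of negative moments of the GMC mass.'' It cannot: the band probability $\mathbb{P}(\sup_{[0,r]}B\in[k-1,k])$ is only $O(r^{-1/2})$, so any argument that decouples the indicator from the denominator loses a full factor of $r$. The $r^{-3/2}$ is the pointwise size of the joint density of $(\sup_{[0,r]}B,\,B_r)$, and one is only allowed to integrate that density over an effectively bounded range of terminal values because the weight $e^{(2\gamma-Q)B_r}$ must be tamed by the portion of the denominator living near the terminal time; this three-way coupling is exactly what your parenthetical about ``the opposite regime'' defers, and it is the actual content of the proof. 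A concrete sanity check that no decoupled argument can work: when $2\gamma-Q<0$ the integral $\int_0^r\int_0^{2\pi}e^{\gamma P_s}\mu_Y(ds,d\sigma)$ increases to an a.s.\ finite limit as $r\to\infty$ while $\mathbb{P}(\sup_{[0,r]}P\in[k-1,k])$ tends to a positive constant, so the left-hand side stays bounded away from $0$; the estimate is therefore only meaningful in the regime $2\gamma\geq Q$ (the complementary case being covered by the trivial $O(1)$ bound on the negative moment, which suffices for the application). If you want a complete argument you should follow the proof of Lemma 6.5 in \cite{ward} rather than reassemble it from these pieces.
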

Now we can estimate 
\begin{align*}
&\E \left[ \left(  \int_0^{- \ln |x_j-y_j|_\eps} \int_0^{2\pi} e^{\gamma P_s} \mu_Y(ds, d \sigma) \right)^{-\frac{q}{n}}  \right] \\
& \leq \sum_{k=0}^\infty \E \left[ \mathbf{1}_{M_k} \left(  \int_0^{-\ln|x_j-y_j|_\eps} \int_0^{2\pi} e^{\gamma P_s} \mu_Y(ds,d \sigma) \right)^{-\frac{q}{n}} \right] \\
& \leq C \sum_{k=0}^\infty (k+1)e^{(- Q - \frac{ \sum_{i=1}^N \alpha_i - 2Q}{n }  )k} |\ln |x_j-y_j|_\eps |^{- \frac{3}{2}} |x_j-y_j|_\eps^{ \frac{(2 \gamma-Q)^2}{2}}\,.
\end{align*}
The series converges since $Q + \frac{\sum_{i=1}^N \alpha_i -2Q}{n} > 0$. Thus we have shown that
\begin{align}\label{appendix_single_integral_estimate}
\E [W_j^{-\frac{q}{n}}] & \leq C |\ln |x_j-y_j|_\eps |^{3/2} |x_j-y_j|_\eps^{ \frac{(2 \gamma-Q)^2}{2}}\,,
\end{align}
where
\begin{align*}
W_j &= \int_{D_j}\frac{1}{ |x-x_j|^{ \gamma^2} |x-y_j|^{\gamma^2}} M^0_{\gamma}(d^2 x)\,.
\end{align*}

\end{document}